\DeclarePairedDelimiterX\MeijerM[3]{\lparen}{\rparen}%
{\begin{smallmatrix}#1 \\ #2\end{smallmatrix}\delimsize\vert\,#3}
\title{Stochastic Analysis of Cooperative Satellite-UAV Communications}
\author{Yu Tian, Gaofeng Pan, $Senior Member, IEEE$, Mustafa A. Kishk, $Member, IEEE$, and Mohamed-Slim Alouini, $Fellow, IEEE$\vspace{-10mm}}
\date{February 2020}
\newtheorem{theorem}{Theorem}
\newtheorem{lemma}{Lemma}
\newtheorem{corollary}{Corollary}
\newtheorem{proposition}{Proposition}
\begin{document}
\maketitle

\begin{abstract}
    In this paper, a dual-hop cooperative satellite-unmanned aerial vehicle (UAV) communication system including a satellite (S), a group of cluster headers (CHs), which are respectively with a group of uniformly distributed UAVs, is considered. Specifically, these CHs serve as aerial decode-and-forward relays to forward the information transmitted by S to UAVs. Moreover, free-space optical (FSO) and radio frequency (RF) technologies are respectively adopted over S-CH and CH-UAV links to exploit FSO's high directivity over long-distance transmission and RF's omnidirectional coverage ability. The positions of the CHs in the 3-dimensional space follow the Matérn hard-core point processes type-II in which each CH can not be closer to any other ones than a predefined distance. Three different cases over CH-UAV links are considered during the performance modeling: interference-free, interference-dominated, and interference-and-noise cases. Then, the coverage performance of S-CH link and the CH-UAV links under three cases is studied and the closed-form analytical expressions of the coverage probability (CP) over both links are derived. Also, the asymptotic expressions for the CP over S-CH link and CH-UAV link in interference-free case are derived. Finally, numerical results are provided to validate our proposed analytical models and thus some meaningful conclusions are achieved.
\end{abstract}

\begin{IEEEkeywords}
Coverage probability, free-space optical communication, Matérn hard-core point process, satellite communication, stochastic geometry, unmanned aerial vehicle
\end{IEEEkeywords}

\section{Introduction}
Exhibiting the advantages of large-scale coverage, abundant frequency resource, and flexibility of deployment, satellite communication has been widely applied in disaster monitoring and rescue, location and navigation, and long-distance information transmission\cite{pan2020performance}. So far, researches mainly focus on satellite-terrestrial communication systems from the aspects of performance analysis \cite{zedini2020performance,pan2020harq,pan2020performance}, resource allocation \cite{pan2020performance,Kawamoto2020flex}, user scheduling \cite{Christopoulos2015muliti}, and physical layer security analysis \cite{zhang2020secrecy,illi2020phy}.

On the other hand, unmanned aerial vehicle (UAV) has been widely used in many applications, including military investigation, disaster relief and rescue, law enforcement, aerial photography, agricultural monitoring, and plant protection, etc., to exploit the virtues of UAV, like small size, light weight, low cost, flexible and fast deployment, and scalability \cite{zolanvari2020potential}. As UAVs can freely change their locations, line-of-sight (LoS) communication links with ground users or stations can be quickly and efficiently built up. Considerable attention has been paid to the resource allocation, trajectory planning, and physical layer security on UAV-ground communications \cite{cai2020joint,lei2020safeguarding,savkin2020securing,pan2020secrecy}. 
However, in the scenarios that ground facilities are destroyed by natural disasters such as earthquakes and floods, or in some inaccessible areas like desert, ocean, and forest, it is hard to build up communication links with UAVs, which results in accidents out of control or even crashing.

To tackle the aforementioned problems, the satellite can serve as an alternative to play a similar role as the terrestrial facilities to set up reliable links between UAVs and the remote control center. Then, by jointly applying the merits of both satellite and UAV communications, satellite-UAV communication systems can introduce more flexibility to the applications of UAVs, especially in harsh application scenarios, e.g., disaster response, natural resource exploration, and military applications in hostile and unfamiliar environments, compared to either traditional satellite communication systems or traditional UAV communication systems. 

Motivated by these observations, there have been plenty of researches presented to design and study satellite-UAV communications in which UAVs work as aerial relays to assist the communications between the satellite and terrestrial terminals \cite{li2020unified,sharma2020outage,huang2020energy, kong2020multiuser,dai2020uav} or aerial terminals\cite{zhou2019beam}. In \cite{li2020unified}, the outage performance of hybrid satellite/UAV terrestrial non-orthogonal multiple access networks in which one UAV served as a relay to forward signals to ground users was investigated and the optimal location of UAV to maximize the sum rate was achieved. In \cite{sharma2020outage}, the outage probability (OP) of a hybrid satellite-terrestrial network in which a group of UAVs are mobile in a three-dimensional (3D) cylindrical space and act as relays was analyzed. In \cite{huang2020energy}, the energy-efficient beamforming was investigated for a satellite-UAV-terrestrial system and in that considered system a multi-antenna UAV works as a relay. In \cite{kong2020multiuser}, the ergodic capacity of an asymmetric free-space optical (FSO)/radio frequency (RF) link in satellite-UAV-terrestrial networks was evaluated while FSO communication was adopted in the satellite-UAV link. In \cite{dai2020uav}, satellite-UAV-ground integrated green Internet of things networks were proposed and studied by optimizing transmit power allocation and UAV trajectory to achieve maximum vehicle rate. In \cite{zhou2019beam}, beam management and self-healing in satellite-UAV mesh millimeter-Wave networks were studied to address the beam misalignment issues between UAVs, and UAV head and satellite/base station. 

Therefore, one can see that UAVs are normally used as aerial relays to serve terrestrial terminals in most of the existing works on satellite-UAV communication systems. To the best of the authors' knowledge, there are no researches presented to investigate the performance of the satellite-UAV systems under the cases that UAVs play as terminals for some specialized application purposes, e.g., photography, observation, surveillance, and strike. When numerous UAVs are deployed, very-high-gain antennas for RF communications or high-accuracy laser receiving systems for FSO communications can not be equipped with UAVs, due to their rigorous hardware limitations. To guarantee the communication quality of the satellite-UAV link in which large path-loss leads to very weak receiving signals, aerial relays with advanced receiving facilities can be deployed to improve the communications between the satellite and UAVs. 


Furthermore, generally, traditional satellite communication links are built up via RF links, and then the transmitted information over such long-distance RF transmission links is quite vulnerable to be wiretapped because of its broadcasting property \cite{pan2020performance}. Benefiting from its inherent metrics, including strong directivity, high security, low risk of eavesdropping, and low interference among different beams, FSO technology has been considered and designed as a promising alternative for the link between the satellite and aerial planes \cite{zedini2020performance,illi2020phy}. 

Motivated by these aforementioned observations, in this work we consider a cooperative satellite-UAV communication system consisting of a satellite (S) and a group of common airplanes/UAVs with superior hardware resources serving as cluster headers (CHs), which are respectively with a group of uniformly distributed UAVs. Specifically, S first transmits information to CHs via FSO links to exploit the high directivity over the long-distance transmission of FSO technology so as to safeguard the information security, and then CHs respectively decode the received information and then forward the recoded information to the UAVs around them by using RF technology. Considering that each CH has its own serving space, the positions of CHs in 3D space follow Matérn hard-core point processes (MHCPP) type-II \cite{haenggi2012stochastic} in which one CH can not be closer to any other CHs than a predefined distance. Moreover, the distribution of UAVs obeys a homogeneous Poisson point processes (HPPP) in the serving space of each CH.

Though there exist considerable works presented to study the performance of traditional terrestrial wireless networks in two-dimensional space by using MHCPP \cite{hunter2008trans,he2016modeling,omri2018distance}, they can not be directly applied to the 3D scenario considered in this work.


The main contributions of this work are summarized as follows.

1) Compared with \cite{li2020air} in which the statistical randomness of the interfering signals in 3D space was approximated by using Gamma distribution via the central limit theorem, in this work the more accurate moment generating function (MGF) of the summation of interfering signals over CH-UAV RF links is derived while considering the randomness of the 3D locations of CHs. 

2) Compared with \cite{li2020air,bach2018model,omri2016model} in which non-closed-form analytical expressions were presented for performance indices while considering 3D interfering scenarios, in this work the closed-form analytical expressions are derived for the coverage probability (CP) over S-CH FSO links and CH-UAV RF links in interference-free, interference-dominated, and interference-and-noise cases. 

3) The asymptotic expressions for the CP are derived and the diversity orders are calculated for both S-CH FSO link and CH-UAV RF link in interference-free case.

The rest of this paper is organized as follows. In Section II, the considered dual-hop cooperative  satellite-UAV communication system is described. In Section III, IV and V, the coverage  performance of S-CH FSO links, CH-UAV RF links, and the end-to-end (e2e) outage performance of S-CH-UAV links is investigated. In Section VI, numerical results are presented and discussed. Finally, the paper is concluded with some remarks in Section VII.
\section{System Model}\label{smodel}
\begin{figure}[!htb]
\centering 
 \setlength{\abovecaptionskip}{0pt}
  \setlength{\belowcaptionskip}{10pt}
\begin{minipage}[b]{0.65\textwidth} 
\centering 
\setlength{\abovecaptionskip}{0pt}
\setlength{\belowcaptionskip}{10pt}
\includegraphics[width=4 in]{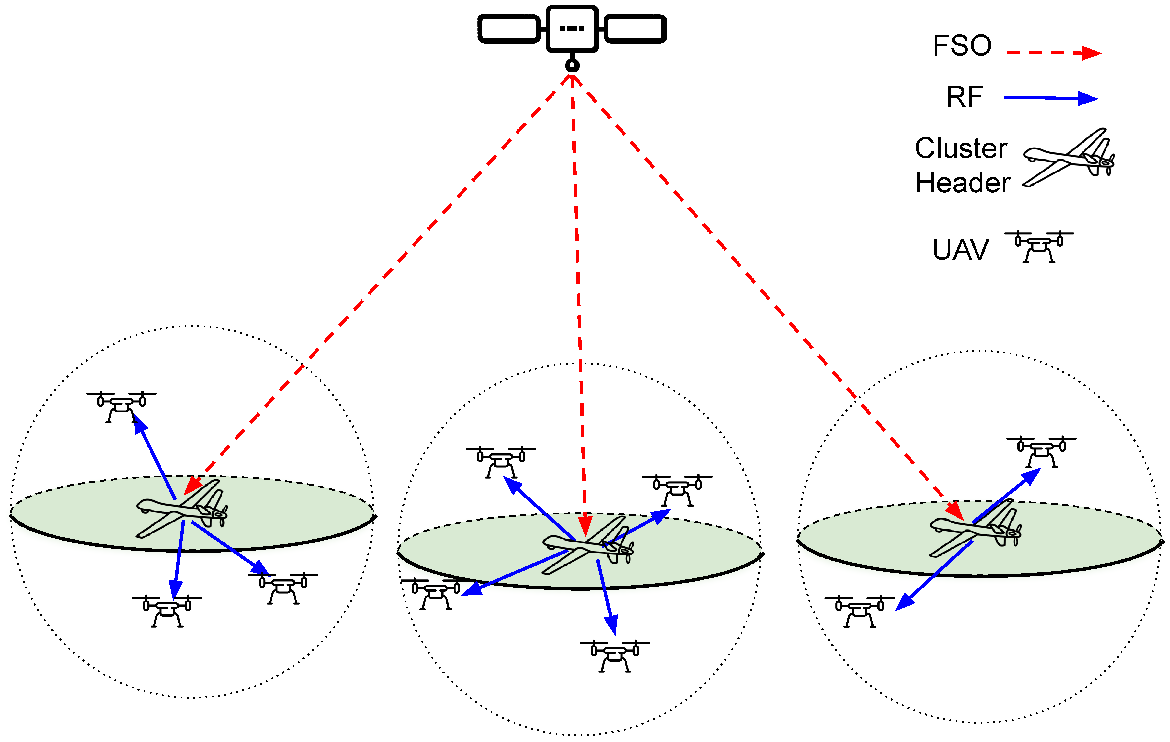}
    \caption{System model}
\label{sm}
\end{minipage}
\begin{minipage}[b]{0.3\textwidth} 
\centering 
\setlength{\abovecaptionskip}{0pt}
\setlength{\belowcaptionskip}{10pt}
\includegraphics[width=1.8 in]{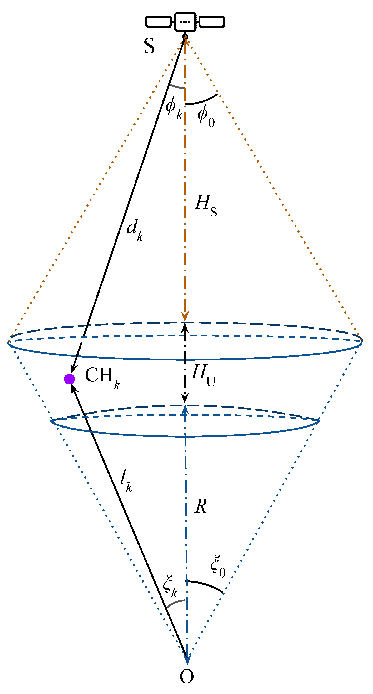}
    \caption{S-CH system model}
\label{sh}
\end{minipage}
\vspace{-10mm}
\end{figure}
In this work, a dual-hop cooperative satellite-UAV communication system, which contains a satellite (S) and a group of cluster headers (CHs)\footnote{In practical, CHs can be common airplanes piloted by human or the UAVs with superior hardware resources, which is capable of providing data receiving, processing, and forwarding functionalities to serve as aerial relays between S and UAVs.} that are respectively with a group of uniformly distributed UAVs, is considered, as shown in Fig.\ref{sm}. Specifically, S first delivers its information to CHs, and then each CH decodes and forwards the recorded information to the UAVs within its serving space. 
\subsection{S-CH FSO link}

It is assumed that FSO communication technology is adopted over S-CH links to exploit its high directivity to minimize the probability that the transmitted information is wiretapped during the long-distance transmission over S-CH links\footnote{Normally, the distance of satellite-aerial transmissions ranges from hundreds of to tens of thousands of kilometers, which depends on the orbit heights of the satellite.},  and RF communication is employed over CH-UAV links to utilize its omnidirectional coverage ability to realize information broadcasting in the local space of each CH.

Also, to reflect and meet the practical aerial scenarios, in the considered system model the locations of CHs in 3D space are assumed to obey an MHCPP type-II, denoted by $\Phi_{\rm CH}$, with an intensity of $\lambda_{\rm CH}$ and a minimum distance $D_{\min}$ between different CHs. To obtain the proposed MHCPP, a three-step thinning process is applied. Firstly, candidate points whose distribution follows an HPPP $\Phi_{\rm P}$ with an intensity $\lambda_{\rm P}$ are generated in the way that these points are uniformly distributed in the considered 3D space $\mathcal{V}$ and the number of candidate points $N_{\rm P}=|\Phi_{\rm P}|$ has the probability mass function as
\begin{align}
    {\bf{Pr}}\{N_{\rm P}=n\}=\frac{(\lambda_{\rm P}V)^n}{n!}\exp{(-\lambda_{\rm P}V)},
\end{align}
where $V$ is the volume of $\mathcal{V}$.

Then, secondly, each candidate point is assigned with an independent mark which obeys uniform distribution ranging from 0 to 1. Thirdly, all the candidate points are one by one checked whether it is associated with the smallest mark compared with all the other points around it within a distance, $D_{\min}$. If true, the point can remain in $\Phi_{\rm CH}$. Otherwise, it will be eliminated. From the above generating process, one can see that each CH exhibits a spherical repulsion space with the radius, $D_{\min}$.

Thus, the relationship between $\lambda_{\rm CH}$ and $\lambda_{\rm P}$ is given as \cite{chiu2013stochastic}
\begin{align}\label{intensity}
        \lambda_{\rm CH}=\frac{1-\exp{\left(-\frac{4}{3}\pi D_{\min}^3\lambda_{\rm P}\right)}}{\frac{4}{3}\pi D_{\min}^3}.
\end{align}

As shown in Fig. \ref{sh}, CHs are distributed inside a 3D space $\mathcal{V}$ that subtracts a spherical cone with radius $R$ from the other one with radius $R+H_{\rm U}$. The two spherical cones share the same center O and apex angle $\xi_0$. The volume of $\mathcal{V}$ is $V=\frac{2\pi}{3}(1-\cos{\xi_0})[(R+H_{\rm U})^3-R^3]$.

In the FSO link, the received electrical signal of CH$_k$ after photoelectric conversion \cite{ansari2016performance} is
\begin{align}
    y_{k}=\eta P_{\rm S} G_{\rm S} G_{\rm R} \left(\frac{\lambda}{4\pi d_k}\right)^2 h_p h_a h_l s_k+n_k,
\end{align}
where $P_{\rm S}$ is the transmit optical power, $\eta$ is the effective photoelectric conversion ratio, $G_{\rm S}$ and $G_{\rm R}$ are the telescope gains of the transmitter and receiver, $\lambda$ is the wavelength of the laser, $d_k$ is the distance from S to CH$_k$, $h_p$ the random attenuation caused by beam spreading and misalignment fading, $h_a$ is the random attenuation caused by atmospheric turbulence, $h_l$ is the atmospheric loss, $s_k$ is the transmitted symbol with the average power of 1, and $n_k$ is the additive white Gaussian noise (AWGN) of CH$_k$ with power $N_{\rm F}$.

In this work, we adopt the fading model mentioned in \cite{zedini2020performance} which considers the atmospheric loss $h_l$, the atmospheric turbulence $h_a$ and the misalignment fading $h_p$. Then, the CDF of the channel power gain $h=h_lh_ah_p$ can be given as 
\begin{align}\label{cdfh}
    F_{h}(x)&=\frac{\omega^2}{\Gamma(\alpha)\Gamma(\beta)}\MeijerG{1}{\omega^2+1}{\omega^2,\alpha,\beta}{0}{\frac{\alpha\beta}{A_0h_l}x},
\end{align}
where $\Gamma(\cdot)$ is the Gamma function, $\alpha$ and $\beta$ are the effective number of small-scale and large-scale eddies of the scattering environment, $G(\cdot|\cdot)$ is the Meijer-$G$ function, $A_0$ is the fraction of power collected by the detector when there is no pointing error, and $\omega$ is the ratio between the equivalent beam radius and the pointing error displacement standard deviation at the receiver.

In this context, the SNR of received signal at CH$_k$ is 
\begin{align}\label{snrkCH}
    \gamma_k=\frac{\eta^2 P^2_{\rm S} G^2_{\rm S} G^2_{\rm R} \lambda^4  h^2}{(4\pi)^4 d_k^4 N_{\rm F}}.
\end{align}

\begin{lemma}
Given $d_k$, the CDF of $\gamma_k$ is 
\begin{align}\label{cdfsnr}
    F_{\gamma_k|d_k^2}(x)=\frac{\omega^2}{\Gamma(\alpha)\Gamma(\beta)}\MeijerG{1}{\omega^2+1}{\omega^2,\alpha,\beta}{0}{\Xi d_k^2 \sqrt{x}},
\end{align}
where $\Xi=\frac{ \alpha \beta(4\pi)^2\sqrt{N_{\rm F}}}{A_0 h_l\eta \lambda^2 P_{\rm S} G_{\rm S} G_{\rm R} }$.
\end{lemma}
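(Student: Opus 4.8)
The plan is to derive $F_{\gamma_k|d_k^2}$ directly from the CDF of the composite FSO channel gain $h$ in \eqref{cdfh} via a monotone change of variables, so that no integration is actually required. First I would note that, once $d_k$ is fixed, \eqref{snrkCH} expresses $\gamma_k$ as $\gamma_k=C_k\,h^2$ with the deterministic constant
\[
C_k=\frac{\eta^2 P_{\rm S}^2 G_{\rm S}^2 G_{\rm R}^2 \lambda^4}{(4\pi)^4 d_k^4 N_{\rm F}},
\]
and that $h=h_l h_a h_p\ge 0$ almost surely, so the map $h\mapsto C_k h^2$ is strictly increasing on the support of $h$; this is what lets us invert it without picking up spurious terms from a two-sided preimage.

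The second step is the inversion and substitution: for $x\ge 0$,
\[
F_{\gamma_k|d_k^2}(x)={\bf{Pr}}\{C_k h^2\le x\}={\bf{Pr}}\Big\{h\le\sqrt{x/C_k}\Big\}=F_h\!\left(\sqrt{x/C_k}\right),
\]
and then I would substitute the right-hand side of \eqref{cdfh} evaluated at $\sqrt{x/C_k}$ in place of its argument, which immediately produces the Meijer-$G$ expression with argument $\frac{\alpha\beta}{A_0 h_l}\sqrt{x/C_k}$. (For $x<0$ the CDF is identically zero, which is consistent with the claimed expression.)

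The final step is purely algebraic bookkeeping of the constant. Writing $1/\sqrt{C_k}=(4\pi)^2 d_k^2\sqrt{N_{\rm F}}\big/(\eta\lambda^2 P_{\rm S} G_{\rm S} G_{\rm R})$ and collecting the factors, the argument becomes $\Xi\,d_k^2\sqrt{x}$ with $\Xi=\alpha\beta(4\pi)^2\sqrt{N_{\rm F}}\big/(A_0 h_l\eta\lambda^2 P_{\rm S} G_{\rm S} G_{\rm R})$, matching \eqref{cdfsnr}. There is no genuine obstacle in this lemma; the only points needing a little care are justifying monotonicity of the transformation (so that the preimage is the single interval $[0,\sqrt{x/C_k}]$) and keeping track of the square roots and the powers of $4\pi$ when assembling $\Xi$.
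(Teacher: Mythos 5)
Your proposal is correct and follows essentially the same route as the paper: rewrite the event $\{\gamma_k\le x\}$ as $\{h\le \tfrac{(4\pi)^2 d_k^2}{\eta\lambda^2}\tfrac{\sqrt{N_{\rm F}x}}{P_{\rm S}G_{\rm S}G_{\rm R}}\}$ using the monotone map $h\mapsto C_k h^2$, substitute into \eqref{cdfh}, and collect the constants into $\Xi$. Your added remarks on monotonicity and the $x<0$ case are harmless extra care, not a different method.
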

\begin{proof}
From \eqref{cdfh} and \eqref{snrkCH}, we can get
\begin{align}
    F_{\gamma_k|d_k^2}(x)&={\bf{Pr}}\{\gamma_k \leq x\}={\bf{Pr}}\left\{\frac{\eta^2 P^2_{\rm S} G^2_{\rm S} G^2_{\rm R} \lambda^4  h^2}{(4\pi)^4 d_k^4 N_{\rm F}}\leq x\right\}\notag\\
    &={\bf{Pr}}\left\{h\leq \frac{(4\pi)^2 d_k^2}{\eta \lambda^2}\frac{\sqrt{N_{\rm F}x}}{P_{\rm S}G_{\rm S}G_{\rm R}}\right\}=\frac{\omega^2}{\Gamma(\alpha)\Gamma(\beta)}\MeijerG{1}{\omega^2+1}{\omega^2,\alpha,\beta}{0}{\Xi d_k^2 \sqrt{x}},
\end{align}
where $\Xi=\frac{ \alpha \beta(4\pi)^2\sqrt{N_{\rm F}}}{A_0 h_l\eta \lambda^2 P_{\rm S} G_{\rm S} G_{\rm R} }$.
\end{proof}
 
\textcolor{black}{
\begin{proposition}
When $D_{\min}\ll ({\rm V})^{\frac{1}{3}}$, CHs are approximately independently and uniformly distributed in $\mathcal{V}$.
\end{proposition}
\begin{proof}
Please refer to Appendix I.
\end{proof}
}

 \begin{lemma}\label{lemmapdfdk2}
 The PDF of $d_k^2$ is
 \begin{align}\label{pdfdk2}
     f_{d_k^2}(x)=\frac{\pi}{2VL}\left[\tau_2^2(x)-\tau_1^2(x)\right],
 \end{align}
where $d_{\min}^2\leq x\leq d_{\max}^2$, $d_{\min}=H_{\rm S}$, $d_{\max}=\sqrt{(R+H_{\rm U})^2+L^2-2(R+H_{\rm U})L\cos{\xi_0}}$, $L=H_{\rm S}+H_{\rm U}+R$, $\tau_2(x)=\min{\left\{R+H_{\rm U},L\cos{\xi_0}-\sqrt{x-L^2\sin^2{\xi_0}}\right\}}$, and $\tau_1(x)=\max{\left\{R,L-\sqrt{x} \right\}}$.
 \end{lemma}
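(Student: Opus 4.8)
By Proposition~1 the CHs are i.i.d.\ uniformly distributed on $\mathcal{V}$, so it suffices to determine the law of $d_k^2$ for a single point $P$ drawn uniformly from $\mathcal{V}$. The plan is to work in spherical coordinates $(r,\theta,\phi)$ centred at the common apex $\mathrm{O}$ with polar axis pointing toward $\mathrm{S}$, so that $P\in\mathcal{V}$ translates into $R\le r\le R+H_{\rm U}$, $0\le\theta\le\xi_0$, $0\le\phi<2\pi$, while $|\mathrm{OS}|=L$. By the law of cosines in the triangle $\mathrm{S}\mathrm{O}P$,
\begin{equation}\label{lawcos}
d_k^2=L^2+r^2-2Lr\cos\theta .
\end{equation}

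First I would extract the joint law implied by uniformity. Since $P$ is uniform on $\mathcal{V}$, its density is proportional to the volume element $r^2\sin\theta$; integrating out $\theta$ and $\phi$ gives the marginal density $f_r(r)=\dfrac{2\pi(1-\cos\xi_0)}{V}\,r^2$ on $[R,R+H_{\rm U}]$, and, conditionally on $r$, the angle $\theta$ has density $\propto\sin\theta$ on $[0,\xi_0]$ (with $\phi$ uniform and irrelevant). Hence $U:=\cos\theta$ is, conditionally on $r$, uniform on $[\cos\xi_0,1]$. Substituting into \eqref{lawcos}, conditionally on $r$ the quantity $d_k^2$ is an affine, strictly decreasing function of $U$, therefore uniform on $\big[(L-r)^2,\;L^2+r^2-2Lr\cos\xi_0\big]$ with conditional density $f_{d_k^2\mid r}(x)=\dfrac{1}{2Lr(1-\cos\xi_0)}$ on that interval and $0$ elsewhere.

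Next I would marginalise over $r$: $f_{d_k^2}(x)=\int_R^{R+H_{\rm U}}f_{d_k^2\mid r}(x)\,f_r(r)\,dr$. The factor $1-\cos\xi_0$ cancels, leaving $\dfrac{\pi}{LV}\int r\,dr$ over those $r\in[R,R+H_{\rm U}]$ for which $x$ lies in the conditional support, i.e.\ $(L-r)^2\le x$ and $x\le L^2+r^2-2Lr\cos\xi_0$. Since $r\le R+H_{\rm U}<L$, the first condition is $r\ge L-\sqrt{x}$; the second is the quadratic inequality $r^2-2Lr\cos\xi_0+(L^2-x)\ge0$, whose branch relevant here is $r\le L\cos\xi_0-\sqrt{x-L^2\sin^2\xi_0}$. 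Intersecting these with $[R,R+H_{\rm U}]$ gives exactly $\tau_1(x)\le r\le\tau_2(x)$, and $\int_{\tau_1}^{\tau_2}r\,dr=\tfrac12\big(\tau_2^2-\tau_1^2\big)$ produces \eqref{pdfdk2}. The endpoints of the support are then the values of $x$ for which $\tau_1(x)\le\tau_2(x)$, equivalently the extreme values of $L^2+r^2-2Lr\cos\theta$ over $\mathcal{V}$: the minimum $H_{\rm S}^2$ is attained at the sub-satellite point $(r,\theta)=(R+H_{\rm U},0)$ because $L>R+H_{\rm U}$, and the maximum at the outer rim $(r,\theta)=(R+H_{\rm U},\xi_0)$, which is $d_{\max}^2=(R+H_{\rm U})^2+L^2-2(R+H_{\rm U})L\cos\xi_0$.

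The probabilistic core — that $\cos\theta$, and hence $d_k^2$, is conditionally uniform — is immediate once the coordinates are set up; I expect the genuine work to lie in the bookkeeping of the marginalisation step. There one must select the correct root of the quadratic in $r$, check that on the support the radicands stay nonnegative so that the two square roots make sense (or, when the cone constraint is inactive near the lower end, that the $\min$ defining $\tau_2(x)$ harmlessly collapses to $R+H_{\rm U}$), and confirm that the extrema of $d_k^2$ over $\mathcal{V}$ occur at the claimed corners — all of which rest on the model's implicit geometric assumptions, in particular $L>R+H_{\rm U}$ and that the cone is wide enough that the point of $\mathcal{V}$ farthest from $\mathrm{S}$ is its outer rim.
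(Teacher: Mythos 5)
Your core derivation is correct and is essentially the paper's own argument in different clothing: conditioning on $r$ and using that $\cos\theta$ is conditionally uniform on $[\cos\xi_0,1]$ reproduces exactly the paper's Jacobian change of variables from $(l_k,\xi_k)$ to $(d_k^2,l_k)$ (joint density $\pi y/(VL)$), and your marginalisation with the constraint set $\tau_1(x)\le r\le\tau_2(x)$, including the choice of the root $r\le L\cos\xi_0-\sqrt{x-L^2\sin^2\xi_0}$, matches the paper's Appendix~I step for step.

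The one genuine problem is your justification of the upper support endpoint. You assert that the maximum of $d_k^2$ over $\mathcal{V}$ is attained at the outer rim $(r,\theta)=(R+H_{\rm U},\xi_0)$, hedged by the assumption that ``the cone is wide enough.'' But that assumption is incompatible with the condition you need (and implicitly use) to discard the larger quadratic root, namely $L\cos\xi_0>R+H_{\rm U}$: under that condition $\partial_r d_k^2=2(r-L\cos\theta)<0$ on $[R,R+H_{\rm U}]$ for every $\theta\le\xi_0$, so at $\theta=\xi_0$ the distance is maximised at the \emph{inner} radius $r=R$, giving $L^2+R^2-2LR\cos\xi_0$ rather than the stated $d_{\max}^2$; the outer-rim corner is the maximiser only when $L\cos\xi_0<R+H_{\rm U}/2$, i.e.\ an extremely wide cone, which contradicts both your root-selection step and the paper's operating regime ($\xi_0=\pi/800$, $L\approx 4.2\times10^4$ km). So your argument for the support cannot be completed as written. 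To be fair, the density formula itself is unaffected — $\frac{\pi}{2VL}[\tau_2^2(x)-\tau_1^2(x)]$ is valid wherever $\tau_1(x)<\tau_2(x)$ and the $\min/\max$ truncations do the bookkeeping automatically — and the paper's own proof simply asserts the bounds ``it can be easily seen,'' so the endpoint is inherited from the lemma statement; but your attempted proof of it is internally inconsistent and should either be dropped (taking the endpoints as given) or corrected by identifying the actual extremal corner under the stated geometry.
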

 \begin{proof}
 Please refer to Appendix II.
 \end{proof}
 

\begin{theorem}
The CDF of $\gamma_k$ is
\begin{align}
    F_{\gamma_k}(x)&=\frac{\pi^2 b_1\omega^2}{2M_fVL\Gamma(\alpha)\Gamma(\beta)}\sum_{i=1}^{M_f}\left[\tau_2^2(b_i)-\tau_1^2(b_i)\right]\sqrt{1-t^2_i}\MeijerG{1}{\omega^2+1}{\omega^2,\alpha,\beta}{0}{\Xi b_i\sqrt{x}},
\end{align}
where $t_i=\cos{\frac{2i-1}{2M_f}\pi}$, $b_i=b_1t_i+b_2$, $b_1=\frac{d^2_{\max}-d^2_{\min}}{2}$ and $b_2=\frac{d^2_{\max}+d^2_{\min}}{2}$.
\end{theorem}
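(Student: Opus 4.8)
The plan is to de-condition the result of Lemma 1 over the distance statistics. By Proposition 1 the CH positions are i.i.d.\ uniform over $\mathcal V$, so the squared distance $d_k^2$ from S to CH$_k$ has the density of Lemma~\ref{lemmapdfdk2}, and the law of total probability gives
\begin{align}
F_{\gamma_k}(x) = \int_{d_{\min}^2}^{d_{\max}^2} F_{\gamma_k|d_k^2}(x)\, f_{d_k^2}(y)\, dy .
\end{align}
First I would substitute \eqref{cdfsnr} for the conditional CDF and \eqref{pdfdk2} for $f_{d_k^2}$, pulling the constants $\omega^2/(\Gamma(\alpha)\Gamma(\beta))$ and $\pi/(2VL)$ outside the integral, so that the only $y$-dependence left in the integrand is the factor $\tau_2^2(y)-\tau_1^2(y)$ multiplied by the Meijer-$G$ function with argument $\Xi y\sqrt x$.

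Second, because this integral has no elementary antiderivative, I would evaluate it by Gauss--Chebyshev quadrature. The affine substitution $y=b_1t+b_2$ with $b_1=(d_{\max}^2-d_{\min}^2)/2$ and $b_2=(d_{\max}^2+d_{\min}^2)/2$ maps $[d_{\min}^2,d_{\max}^2]$ onto $[-1,1]$ with $dy=b_1\,dt$; writing the integrand as itself times $\sqrt{1-t^2}/\sqrt{1-t^2}$ and applying the first-kind rule $\int_{-1}^{1}\frac{g(t)}{\sqrt{1-t^2}}\,dt\approx\frac{\pi}{M_f}\sum_{i=1}^{M_f}g(t_i)$ at the nodes $t_i=\cos\frac{2i-1}{2M_f}\pi$ produces exactly the stated finite sum, with $b_i=b_1t_i+b_2$ and overall prefactor $\frac{\pi^2 b_1\omega^2}{2M_fVL\Gamma(\alpha)\Gamma(\beta)}$.

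The bookkeeping to watch is: (i) invoking Proposition 1 so that $f_{d_k^2}$ of Lemma~\ref{lemmapdfdk2} is indeed the correct marginal to average against; (ii) matching the support $[d_{\min}^2,d_{\max}^2]$ of that density with the quadrature interval after rescaling; and (iii) noting that the $\min/\max$ inside $\tau_1,\tau_2$ is harmless, since the quadrature only needs pointwise values of the integrand at the $b_i$. The only real obstacle is conceptual rather than technical: the resulting expression is a quadrature approximation rather than an exact identity, so I would remark that $M_f$ controls the truncation error and can be taken large enough to make it negligible, deferring the empirical accuracy check to the numerical-results section, consistent with how the rest of the paper treats such expressions.
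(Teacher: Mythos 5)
Your proposal is correct and follows essentially the same route as the paper: de-condition the result of Lemma 1 against the $d_k^2$ density of Lemma 2 (justified by Proposition 1), then apply the affine change of variables and first-kind Chebyshev--Gauss quadrature with nodes $t_i=\cos\frac{2i-1}{2M_f}\pi$ to obtain the stated finite sum. Your remark that the result is a quadrature approximation (with $M_f$ controlling the error) is a fair and slightly more careful reading than the paper's equality sign, but the derivation itself is identical.
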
 
 
\begin{proof}
Using \eqref{cdfsnr}, the CDF of $\gamma_k$ is
\begin{small}
\begin{align}\label{cdfgammak}
    F_{\gamma_k}(x)&=\int\limits_{d^2_{\min}}^{d^2_{\max}}F_{\gamma_k|d_k^2}(x|y)f_{d_k^2}(y) dy=\int\limits_{d^2_{\min}}^{d^2_{\max}}\frac{\pi \omega^2 }{2VL\Gamma(\alpha)\Gamma(\beta)}[\tau_2^2(y)-\tau_1^2(y)]\MeijerG{1}{\omega^2+1}{\omega^2,\alpha,\beta}{0}{\Xi x\sqrt{y}}dy.
\end{align}
\end{small}

By setting $b_1=\frac{d^2_{\max}-d^2_{\min}}{2}$, $b_2=\frac{d^2_{\max}+d^2_{\min}}{2}$ and $t=\frac{y-b_2}{b_1}$ and employing Chebyshev-Gauss quadrature in the first
case, \eqref{cdfgammak} can be written as
\begin{align}
 F_{\gamma_k}(x) &=\frac{\pi b_1\omega^2}{2VL\Gamma(\alpha)\Gamma(\beta)}\int\limits_{-1}^{1}\left[\tau_2^2(b_1t+b_2)-\tau_1^2(b_1t+b_2)\right] \MeijerG{1}{\omega^2+1}{\omega^2,\alpha,\beta}{0}{\Xi (b_1t+b_2)\sqrt{x}}dt\notag\\
    &=\frac{\pi^2 b_1\omega^2}{2M_fVL\Gamma(\alpha)\Gamma(\beta)}\sum_{i=1}^{M_f}\left[\tau_2^2(b_i)-\tau_1^2(b_i)\right]\sqrt{1-t^2_i} \MeijerG{1}{\omega^2+1}{\omega^2,\alpha,\beta}{0}{\Xi b_i\sqrt{x}},
\end{align}
where $t_i=\cos{\frac{2i-1}{2M_f}\pi}$ and $b_i=b_1t_i+b_2$.
\end{proof}

\subsection{CH-UAV RF link}

Moreover, in this work, we also assume that UAVs around the $k$th CH, CH$_k$, are uniformly distributed in its serving sphere, which is centered at CH$_k$ with radius $D_k$, and their positions in 3D space obey an HPPP $\Phi_{k{\rm U}}$ with an intensity $\lambda_{k{\rm U}}$. CH$_k$ forwards the recoded information with the transmit power $P_k$ to the UAVs within its serving space.

In the second hop shown in Fig. \ref{sm}, CH$_k$ will transmit the recoded signals to the UAVs within its serving space, namely, the ones within the sphere with radius $D_k$ centered at CH$_k$. 

To guarantee that there is no intersection between serving spaces of any two CHs (CH$_k$ and CH$_j$), $D_j+D_k<D_{\min}$ ($j \neq k$) should be satisfied. The number of UAVs around CH$_k$, $N_{k{\rm U}}$, follows an HPPP with the density $\lambda_{k{\rm U}}$. The PMF of $N_{k{\rm U}}$ is ${\bf{Pr}}\{N_{k{\rm U}}=n\}=(\mu^n/n!)\exp{(-\mu)}$, where $\mu=\frac{4\pi D_k^3 }{3}\lambda_{k{\rm U}}$ is the mean measure. To make the analysis tractable, we assume that all CHs have the same serving radius and the UAVs around them have the same density, namely, $D_k=D$ and $\lambda_{k{\rm U}}=\lambda_{\rm U}$ ($k=1,..,N_{\rm CH}$, $N_{\rm CH}=|\Phi_{\rm CH}|$). Meanwhile, we assume that the communication channels from CH to UAVs suffer Nakagami-$m$ fading\footnote{As suggested in Chapter 3.2.3 of \cite{goldsmith2005wireless}, Nakagami-$m$ can become approximate Rician fading with parameter $K$ which can be deduced by the fading parameter $m=(K+1)^2/(2K+1)$. In other words, Nakagami-$m$ can represent the channel fading in the case of LoS transmission, which is the typical propagation scenarios for the transmissions between CHs and UAVs. Also, when $m$ approaches infinite, Nakagami-$m$ can be used to describe the case that there is no fading.}.

Then, the PDF and CDF of channel power gain $g$ in this case are shown as
\begin{align}\label{nakagamipdf}
    f_g(x)=\left(\frac{m}{\Omega}\right)^{m}\frac{x^{m-1}}{(m-1)!}\exp{\left(-\frac{m}{\Omega}x\right)}
\end{align}
and
\begin{align}\label{nakagamicdf}
    F_g(x)=1-\sum\limits_{i=0}^{m-1}\left(\frac{m}{\Omega}\right)^{m_i}\frac{x^{m_i}}{m_i!}\exp{\left(-\frac{m}{\Omega}x\right)},
\end{align}
respectively, where $\Omega$ is the average received power, $m$ is the fading parameter, and $m_i=m-i-1$. Notably, to simplify the analysis, we only consider the case that $m$ is integer in the following of this paper.  

The free-space path-loss from CH$_k$ to the $j$th $(0 < j \leq N)$ UAV marked as U$_j$ can be given by $\rho d_{kj}^{\alpha_r}$, where $\rho$ denotes the path-loss at the distance $d = 1$ m and its value depends on the carrier frequency, $\alpha_r$ is the path-loss factor, and $d_{kj}$ is the link distance between CH$_k$ and U$_j$.

\begin{lemma}
The PDF of $d_{kj}$ are respectively given as
\begin{align}\label{pdfdkj}
    f_{d_{kj}}(x)=\left\{
\begin{array}{ll}
\frac{3}{D^3}x^2, & 0\leq x \leq D;\\
0, & {\rm{otherwise}}
\end{array}
\right..
\end{align}
\end{lemma}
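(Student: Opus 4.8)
The plan is to read off the law of $d_{kj}$ directly from the spatial model. Since the UAVs around CH$_k$ form a homogeneous PPP $\Phi_{k{\rm U}}$ of intensity $\lambda_{\rm U}$ restricted to the serving ball of radius $D$ centred at CH$_k$, the standard conditioning property of Poisson processes gives that, given $N_{k{\rm U}}=n$, the $n$ UAV positions are i.i.d.\ and uniformly distributed over that ball; this is already consistent with the stated PMF ${\bf{Pr}}\{N_{k{\rm U}}=n\}=(\mu^n/n!)\exp(-\mu)$ with $\mu=\frac{4\pi D^3}{3}\lambda_{\rm U}$. Equivalently, one may simply invoke the modelling assumption that U$_j$ is uniform in the serving sphere. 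Either way, $d_{kj}$ is the radial coordinate of a point drawn uniformly from a ball of radius $D$ in $\mathbb{R}^3$.

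First I would obtain the CDF by a volume ratio: for $0\le x\le D$,
\begin{align}
F_{d_{kj}}(x)={\bf{Pr}}\{d_{kj}\le x\}=\frac{\frac{4}{3}\pi x^3}{\frac{4}{3}\pi D^3}=\frac{x^3}{D^3},
\end{align}
together with $F_{d_{kj}}(x)=0$ for $x<0$ and $F_{d_{kj}}(x)=1$ for $x>D$. Then I would differentiate to get $f_{d_{kj}}(x)=\frac{d}{dx}F_{d_{kj}}(x)=\frac{3x^2}{D^3}$ on $[0,D]$ and $0$ otherwise, which is exactly \eqref{pdfdkj}. A quick sanity check is $\int_0^D \frac{3x^2}{D^3}\,dx=1$.

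There is essentially no obstacle here: the only step needing a word of justification is the uniformity of U$_j$ inside the serving sphere, which follows from the restriction-and-conditioning property of the HPPP (or directly from the stated uniform-distribution assumption); the remainder is an elementary volume computation followed by a single differentiation.
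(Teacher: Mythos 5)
Your proposal is correct and follows essentially the same route as the paper: establish that U$_j$ is uniform in the serving ball (the paper asserts this from the HPPP model, you justify it via the conditioning property), obtain $F_{d_{kj}}(x)=x^3/D^3$ (the paper's spherical-coordinate triple integral is exactly your volume ratio), and differentiate. No gaps.
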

\begin{proof}
UAVs served by CH$_k$ can be modeled as a set of independently and identically uniformly distributed points in a sphere centered at CH$_k$, denoted as $W_k$. According to \cite{pan20173d}, $d_{kj}$ can be calculated from $W_k$, the PDF of which can be presented as 
\begin{align}
    f_{W_k}=\frac{\lambda_{\rm U}}{\mu}=\frac{3}{4\pi D^3}.
\end{align}

Therefore, the CDF of $d_{kj}$ can be calculated as
\begin{align}
    F_{d_{kj}}(x)&=\int\limits_{0}^{x}\int\limits_{0}^{\pi}\int\limits_{0}^{2\pi}\frac{3}{4\pi D^3} \sin{\xi}r^2d\theta d\xi dr=\left\{
\begin{array}{ll}
0, & x< 0;\\
\frac{x^3}{D^3}, & 0\leq x \leq D;\\
1, & D<x
\end{array}
\right..
\end{align}

Then, the PDF of $d_{kj}$ can be obtained though $f_{d_{kj}}(x)=\frac{dF_{d_{kj}}(x)}{dx}$.
\end{proof}

\section{Performance Analysis over S-CH FSO Links}

\subsection{Coverage Performance}
CP is defined as the ergodic probability that the received SNR of a randomly selected receiver exceeds a specific threshold. Adopting \eqref{cdfsnr} and given the SNR threshold $\gamma_{\rm th}$, the CP of CH$_k$ is given as
\begin{align}\label{pcop}
   & P_{{\rm cov},{\rm{SCH}}_k}(\gamma_{\rm th}) ={\bf{Pr}}\{\gamma_k>\gamma_{\rm th}\}=1-F_{\gamma_k}(\gamma_{\rm th})\notag\\
    &~~~=1-\frac{\pi^2 b_1\omega^2}{2M_fVL\Gamma(\alpha)\Gamma(\beta)}\sum_{i=1}^{M_f}\left[\tau_2^2(b_i)-\tau_1^2(b_i)\right]\sqrt{1-t^2_i} \MeijerG{1}{\omega^2+1}{\omega^2,\alpha,\beta}{0}{\Xi b_i\sqrt{\gamma_{\rm th}}}.
\end{align}

\subsection{Asymptotic Coverage Performance}

\begin{theorem}
The CP at high transmit SNR ($P_{\rm S}^2/N_{\rm F}\rightarrow\infty$) can be expressed as 
\begin{small}
\begin{align}\label{cpopasym}
    P^{\infty}_{{{\rm cov},{\rm{SCH}}}_k}(\gamma_{\rm th}) \approx&1-\frac{\pi^2 b_1\omega^2}{2M_fVL\Gamma(\alpha)\Gamma(\beta)}\sum_{i=1}^{M_f}\sqrt{1-t^2_i} \left[\tau_2^2(b_i)-\tau_1^2(b_i)\right]\Bigg[\left(\Xi b_i\sqrt{\gamma_{\rm th}}\right)^{\omega^2}\frac{\Gamma(\alpha-\omega^2)\Gamma(\beta-\omega^2)}{\omega^2}\notag\\
    & +\left(\Xi b_i\sqrt{\gamma_{\rm th}}\right)^{\alpha}\frac{\Gamma(\omega^2-\alpha)\Gamma(\beta-\alpha)}{\alpha\Gamma(\omega^2+1-\alpha)}+\left(\Xi b_i\sqrt{\gamma_{\rm th}}\right)^{\beta}\frac{\Gamma(\omega^2-\beta)\Gamma(\alpha-\beta)}{\beta\Gamma(\omega^2+1-\beta)}\Bigg]
\end{align}
\end{small}

\end{theorem}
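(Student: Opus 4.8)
The plan is to obtain \eqref{cpopasym} by replacing the Meijer-$G$ function in the exact coverage expression \eqref{pcop} with its leading-order behaviour as the argument tends to zero. The first step is to observe that, by the definition of $\Xi$ in Lemma~1, $\Xi$ is proportional to $\sqrt{N_{\rm F}}/P_{\rm S}$, so that for every quadrature node $b_i$ the argument $\Xi b_i\sqrt{\gamma_{\rm th}}$ of $\MeijerG{1}{\omega^2+1}{\omega^2,\alpha,\beta}{0}{z}$ shrinks to $0$ as $P_{\rm S}^2/N_{\rm F}\to\infty$. Consequently it suffices to expand this $G$-function around $z=0^+$ for fixed shape parameters $\omega^2,\alpha,\beta$, and then substitute back $z=\Xi b_i\sqrt{\gamma_{\rm th}}$.

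For the small-argument expansion I would use the Mellin--Barnes contour-integral representation of the Meijer-$G$ function together with the residue theorem: as $z\to 0^+$ the contour may be displaced to the left, so that the integral equals the sum of the residues at the poles of the Gamma factors associated with the three lower parameters $\{\omega^2,\alpha,\beta\}$, located at $s=\omega^2,\alpha,\beta$ and their nonnegative integer shifts. Under the generic assumption that $\omega^2,\alpha,\beta$ are pairwise distinct and that no two of them differ by an integer --- the regime of interest --- every pole is simple, and the three poles nearest the origin are exactly $s=\omega^2$, $s=\alpha$ and $s=\beta$. Evaluating these residues produces the three power terms $z^{\omega^2}$, $z^{\alpha}$ and $z^{\beta}$, each weighted by a product of Gamma functions built from the pairwise parameter differences and from the single upper parameter $\omega^2+1$; matching powers of $z$ term by term then yields precisely the three bracketed coefficients of \eqref{cpopasym}, the factors $\Gamma(\omega^2+1-\alpha)$ and $\Gamma(\omega^2+1-\beta)$ arising from the denominator Gamma factor carrying $\omega^2+1$.

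It then remains to insert this three-term approximation into the Meijer-$G$ appearing in \eqref{pcop} and discard the remainder, which consists of the residues at $s=\omega^2+1,\alpha+1,\beta+1,\dots$ and is $o\big((\Xi b_i\sqrt{\gamma_{\rm th}})^{\min\{\omega^2,\alpha,\beta\}}\big)$; this delivers \eqref{cpopasym}. As a by-product, since in terms of the transmit SNR the argument behaves like $(P_{\rm S}^2/N_{\rm F})^{-1/2}$, the term with the smallest exponent among $\omega^2,\alpha,\beta$ dominates, so the diversity order $\tfrac12\min\{\omega^2,\alpha,\beta\}$ can be read off. I expect the main difficulty to be purely bookkeeping: correctly assigning the Gamma factors of the Mellin--Barnes integrand to numerator versus denominator for this particular $G$-function and tracking the normalisation of each simple residue, and --- for a rigorous rather than formal statement --- verifying that the discarded higher-order residues are uniformly dominated so that the truncated expression is a genuine asymptotic equivalent. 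The non-degeneracy hypothesis on $\omega^2,\alpha,\beta$ must also be recorded, since equalities or integer differences among them would introduce logarithmic corrections absent from \eqref{cpopasym}.
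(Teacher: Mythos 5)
Your proposal is correct and follows essentially the same route as the paper: the paper's one-line proof invokes the argument-inversion identity \cite[Eq.~(9.31.2)]{gradshteyn2014table} together with the known series expansion \cite[Eq.~(17)]{ansari2015performance}, which is exactly the Mellin--Barnes residue computation you carry out explicitly --- simple poles at $s=\omega^2,\alpha,\beta$ (the parameter $0$ sits in the non-pole group, and the upper parameter $\omega^2+1$ supplies the denominator factors $\Gamma(\omega^2+1-\alpha)$, $\Gamma(\omega^2+1-\beta)$) --- yielding the same three coefficients, with the same implicit non-degeneracy caveat on $\omega^2,\alpha,\beta$. Only your incidental remark that the diversity order reads off as $\tfrac12\min\{\omega^2,\alpha,\beta\}$ departs from the paper's Corollary~1, which states $\min\{\omega^2,\alpha,\beta\}$; this is a question of how the average transmit SNR $\Bar{\gamma}$ is normalized relative to $P_{\rm S}$, not of the asymptotic expansion proved here.
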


\begin{proof}
By first inverting the argument of the Meijer-$G$ function using \cite[Eq. (9.31.2)]{gradshteyn2014table} and \cite[Eq.(17)]{ansari2015performance}, 
\eqref{pcop} can be approximated at high SNR ($\Xi\rightarrow0$) as \eqref{cpopasym}.
\end{proof}

\subsection{Diversity Order}
In this work, the diversity order of the considered system is defined as
\begin{align}\label{diversity}
    \Lambda=-\lim_{\Bar{\gamma}\rightarrow\infty}\frac{\log (1-P_{{\rm cov}})}{\log \Bar{\gamma}},
\end{align}
where $\Bar{\gamma}$ is the average transmit SNR and $P_{{\rm cov}}$ is the CP.

\begin{corollary}
The diversity order of S-CH FSO link is $\min\{\omega^2,\alpha,\beta\}$.
\end{corollary}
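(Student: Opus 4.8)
The plan is to read the diversity order straight off the high-SNR expansion already established in Theorem~3. Substituting \eqref{cpopasym} into the definition \eqref{diversity}, the task reduces to evaluating $-\lim_{\bar\gamma\to\infty}\log\bigl(1-P_{\rm cov}\bigr)/\log\bar\gamma$, where $\bar\gamma$ denotes the average transmit SNR. The whole computation is an exponent-matching argument: identify, among the three power terms appearing in \eqref{cpopasym}, the one that decays slowest.

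First I would make the dependence on $\bar\gamma$ explicit. From Lemma~1, $\Xi=\frac{\alpha\beta(4\pi)^2\sqrt{N_{\rm F}}}{A_0 h_l\eta\lambda^2 P_{\rm S}G_{\rm S}G_{\rm R}}$, so $\Xi$ equals a fixed constant times $\bar\gamma^{-1}$ (equivalently $\Xi^2\propto N_{\rm F}/P_{\rm S}^2$, matching the regime $P_{\rm S}^2/N_{\rm F}\to\infty$ invoked in Theorem~3). Writing $\Xi=\kappa\bar\gamma^{-1}$, the right-hand side of \eqref{cpopasym} collapses, after grouping like powers, to
\begin{align*}
1-P_{\rm cov}(\gamma_{\rm th})\;\approx\;C_{\omega}\,\bar\gamma^{-\omega^2}+C_{\alpha}\,\bar\gamma^{-\alpha}+C_{\beta}\,\bar\gamma^{-\beta},
\end{align*}
where $C_{\omega},C_{\alpha},C_{\beta}$ are constants built from the fixed fading and geometry parameters, the threshold $\gamma_{\rm th}$, $\kappa$, and the Chebyshev--Gauss nodes and weights $t_i,b_i$, but carry no further dependence on $\bar\gamma$; in particular the finite sum $\sum_{i=1}^{M_f}$ merely feeds into these constants and generates no new powers of $\bar\gamma$.

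Since $\bar\gamma\to\infty$, the monomial with the smallest exponent dominates, so $1-P_{\rm cov}\sim C\,\bar\gamma^{-\min\{\omega^2,\alpha,\beta\}}$, with $C$ the sum of those $C_{\bullet}$ whose exponent attains the minimum. Taking logarithms gives $\log(1-P_{\rm cov})=-\min\{\omega^2,\alpha,\beta\}\,\log\bar\gamma+\log C+o(1)$; dividing by $\log\bar\gamma$ and letting $\bar\gamma\to\infty$ annihilates both $\log C$ and the $o(1)$ remainder, which yields $\Lambda=\min\{\omega^2,\alpha,\beta\}$.

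The one point that needs care — and the main obstacle — is that the leading coefficient $C$ does not vanish. I would verify that whichever of $\frac{\Gamma(\alpha-\omega^2)\Gamma(\beta-\omega^2)}{\omega^2}$, $\frac{\Gamma(\omega^2-\alpha)\Gamma(\beta-\alpha)}{\alpha\Gamma(\omega^2+1-\alpha)}$, $\frac{\Gamma(\omega^2-\beta)\Gamma(\alpha-\beta)}{\beta\Gamma(\omega^2+1-\beta)}$ accompanies the dominant power is finite and of fixed sign — true for generic values of $\omega^2,\alpha,\beta$ with distinct, non-integer-separated exponents — and that $\tau_2^2(b_i)-\tau_1^2(b_i)\ge 0$ and $\sqrt{1-t_i^2}>0$, so the quadrature sum cannot produce a cancellation. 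If two of $\omega^2,\alpha,\beta$ coincide the Meijer-$G$ expansion replaces the corresponding $\bar\gamma^{-m}$ by $\bar\gamma^{-m}\log\bar\gamma$, but $\log(\bar\gamma^{-m}\log\bar\gamma)/\log\bar\gamma\to-m$ as well, so the diversity order is unaffected. As an alternative that bypasses Theorem~3, one could expand $F_{\gamma_k}(\gamma_{\rm th})$ directly from the Mellin--Barnes integral defining the Meijer-$G$ in \eqref{cdfsnr}--\eqref{pcop}, collecting the residues at the poles $s\in\{\omega^2,\alpha,\beta\}$; this reproduces exactly the same three exponents and hence the same conclusion.
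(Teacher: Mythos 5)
Your argument is essentially the paper's own proof: the paper likewise reads the diversity order off the three additive terms in \eqref{cpopasym}, noting that as $\Xi\rightarrow 0$ the term with the minimum power index dominates, which gives $\min\{\omega^2,\alpha,\beta\}$ via \eqref{diversity}. Your additional bookkeeping (making $\Xi\propto\bar\gamma^{-1}$ explicit, checking the leading coefficient does not vanish, and handling coincident exponents) only tightens steps the paper leaves implicit, so the approach is the same and correct.
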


\begin{proof}
     One can see that there are three additive terms in the square brackets $[\cdot]$ in \eqref{cpopasym}. When $\Xi\rightarrow 0$, it is obvious that the term with the minimum power index is dominant. According to \eqref{diversity}, the diversity order can be reached as $\min\{\omega^2,\alpha,\beta\}$.
\end{proof}

\section{Performance Analysis over CH-UAV RF Links}\label{rflink}

\subsection{Interference-Free Case}
We will first consider the case that there is no interference from other CHs, which represents the cases that the interfering CHs are too far or the transmit power at the interfering CH is too low to incur effective interference to the target UAV. 

Supposing that CH$_k$ has the transmit power of $P_{\rm R}$, the SNR at U$_j$ in the serving space of CH$_k$ can be expressed as
\begin{align}
    \gamma_{kj}=\frac{P_{\rm R}g_{kj}}{\rho d_{kj}^{\alpha_r}N_{\rm R}},
\end{align}
where $g_{kj}$ is the channel power gain of CH$_k$-U$_j$ link and $N_{\rm R}$ is the average power of the AWGN at U$_j$.

\begin{theorem}
The CDF of $\gamma_{kj}$ can be calculated as
\begin{align}\label{cdfrkj}
    F_{\gamma_{kj}}(x)=&1-\frac{3}{\alpha_rD^3}\left(\frac{ \Omega P_{\rm R}}{m\rho N_{\rm R} }\frac{1}{x}\right)^{\frac{3}{\alpha_r}}\sum\limits_{i=0}^{m-1}\frac{1}{m_i!}\gamma\left(m_i+\frac{3}{\alpha_r},\frac{ m\rho N_{\rm R} D^{\alpha_r} }{\Omega P_{\rm R}}x\right).
\end{align}
\end{theorem}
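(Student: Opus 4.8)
\emph{Proof strategy.} The plan is to obtain $F_{\gamma_{kj}}(x)$ by conditioning on the CH$_k$--U$_j$ link distance $d_{kj}$ and then averaging over it with the density \eqref{pdfdkj}. Conditioned on $d_{kj}=r$, the definition $\gamma_{kj}=P_{\rm R}g_{kj}/(\rho d_{kj}^{\alpha_r}N_{\rm R})$ makes $\{\gamma_{kj}\le x\}$ equivalent to $\{g_{kj}\le \rho r^{\alpha_r}N_{\rm R}x/P_{\rm R}\}$, so, invoking the Nakagami-$m$ CDF \eqref{nakagamicdf},
\begin{align}\label{eq:cdfcond}
    F_{\gamma_{kj}}(x)=\int_{0}^{D}F_g\!\left(\frac{\rho r^{\alpha_r}N_{\rm R}x}{P_{\rm R}}\right)\frac{3r^2}{D^3}\,dr .
\end{align}
Because $\int_{0}^{D}(3r^2/D^3)\,dr=1$, the constant ``$1$'' inside $F_g$ gives the leading $1$ in \eqref{cdfrkj}, and the rest of the work is a finite sum over $i=0,\dots,m-1$ of integrals of the shape $\displaystyle\int_{0}^{D}r^{\alpha_r m_i+2}e^{-a r^{\alpha_r}}\,dr$, where $a=m\rho N_{\rm R}x/(\Omega P_{\rm R})$, each one carrying a constant that collects $\tfrac{3}{D^3}$, $(m/\Omega)^{m_i}$, $(\rho N_{\rm R}x/P_{\rm R})^{m_i}$ and $1/m_i!$.

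I would then evaluate each such integral via the substitution $t=a\,r^{\alpha_r}$ (so $r=(t/a)^{1/\alpha_r}$ and $dr=\tfrac{1}{\alpha_r a}(t/a)^{1/\alpha_r-1}\,dt$), which maps $r\in[0,D]$ onto $t\in[0,aD^{\alpha_r}]$ and, after extracting the powers of $a$, yields
\begin{align}\label{eq:inttogamma}
    \int_{0}^{D}r^{\alpha_r m_i+2}e^{-a r^{\alpha_r}}\,dr=\frac{1}{\alpha_r}\,a^{-\left(m_i+\frac{3}{\alpha_r}\right)}\,\gamma\!\left(m_i+\frac{3}{\alpha_r},\,aD^{\alpha_r}\right),
\end{align}
by the definition $\gamma(s,y)=\int_{0}^{y}t^{s-1}e^{-t}\,dt$ of the lower incomplete Gamma function. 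Substituting \eqref{eq:inttogamma} into \eqref{eq:cdfcond} and simplifying the constant --- using $(m/\Omega)^{m_i}(\rho N_{\rm R}x/P_{\rm R})^{m_i}=a^{m_i}$, so the $i$-dependent factor becomes $a^{m_i}\cdot\tfrac{1}{\alpha_r}a^{-(m_i+3/\alpha_r)}=\tfrac{1}{\alpha_r}a^{-3/\alpha_r}$, which no longer depends on $i$ --- one pulls $(\Omega P_{\rm R}/(m\rho N_{\rm R})\cdot 1/x)^{3/\alpha_r}$ out of the summation, rewrites $aD^{\alpha_r}=\frac{m\rho N_{\rm R}D^{\alpha_r}}{\Omega P_{\rm R}}x$, and arrives at \eqref{cdfrkj}.

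I do not expect a real obstacle here; once \eqref{eq:cdfcond} is written down the derivation is routine. The only point demanding care is the bookkeeping of the powers of $a$ after the substitution, i.e.\ checking that the $m_i$-dependent exponents cancel so that the entire prefactor can leave the sum, together with the elementary verifications that $r\mapsto a r^{\alpha_r}$ is a bijection of $[0,D]$ onto $[0,aD^{\alpha_r}]$ for every $\alpha_r>0$ and $x>0$ and sends $0$ to $0$ (so the lower limit of $\gamma(\cdot,\cdot)$ is correct), and that $m\in\mathbb{Z}_{>0}$ guarantees $m_i=m-i-1\ge 0$, which keeps $r^{\alpha_r m_i+2}$ integrable at the origin and \eqref{eq:inttogamma} valid for all terms of the sum.
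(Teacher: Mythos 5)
Your proposal is correct and follows essentially the same route as the paper: condition on $d_{kj}$, average the Nakagami-$m$ CDF \eqref{nakagamicdf} over the density \eqref{pdfdkj}, and reduce each term to $\int_0^D r^{\alpha_r m_i+2}e^{-ar^{\alpha_r}}dr$, with the $a^{m_i}$ powers cancelling so the $3/\alpha_r$ prefactor leaves the sum. The only difference is cosmetic: you evaluate that integral by the substitution $t=ar^{\alpha_r}$, whereas the paper cites the equivalent tabulated identity \cite[Eq.\ (3.381.8)]{gradshteyn2014table}.
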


\begin{proof}
Using \eqref{nakagamicdf} and \eqref{pdfdkj}, $F_{\gamma_{kj}}(z)$ can be expressed as 
\begin{align}\label{cdfrkjnoin}
    F_{\gamma_{kj}}(x)&={\bf{Pr}}\{\gamma_{kj}<x\}={\bf{Pr}}\left\{\frac{P_{\rm R}g_{kj}}{\rho d_{kj}^{\alpha_r}N_{\rm R}}<x\right\}={\bf{Pr}}\left\{g_{kj}<\frac{\rho d_{kj}^{\alpha_r}N_{\rm R}}{P_{\rm R}}x\right\}\notag\\
    &=\int\limits_{0}^{D}F_g\left(\frac{\rho y^{\alpha_r}N_{\rm R}}{P_{\rm R}}x\right)f_{d_{kj}}(y)dy\notag\\
    &=1-\frac{3}{D^3}\sum\limits_{i=0}^{m-1}\left(\frac{m\rho N_{\rm R}}{\Omega P_{\rm R}}x\right)^{m_i}\frac{1}{m_i!} \underbrace{\int\limits_{0}^{D}y^{\alpha_r m_i+2}\exp{\left(-\frac{m\rho N_{\rm R}}{\Omega P_{\rm R}}xy^{\alpha_r}\right)}dy}_{I_0}.
\end{align}

Resorting to \cite[Eq. (3.381.8)]{gradshteyn2014table}, $I_0$ can be given as 
\begin{align}\label{I0}
    I_0=\frac{1}{\alpha_r}\left(\frac{ \Omega P_{\rm R}}{m\rho N_{\rm R} x}\right)^{m_i+\frac{3}{\alpha_r}}\gamma\left(m_i+\frac{3}{\alpha_r},\frac{ m\rho N_{\rm R} D^{\alpha_r} }{\Omega P_{\rm R}}x\right),
\end{align}
where $\gamma(\cdot,\cdot)$ is the lower incomplete gamma function.

Substituting \eqref{I0} in \eqref{cdfrkjnoin}, the CDF of $\gamma_{kj}$ can be obtained as \eqref{cdfrkj}.
\end{proof}

Then, the CP in this case can be achieved as
\begin{align}\label{cpkj5}
    P_{{\rm cov},kj}(\gamma_{\rm th})&=1-F_{\gamma_{kj}}(\gamma_{\rm th})=\frac{3}{\alpha_rD^3}\left(\frac{ \Omega P_{\rm R}}{m\rho N_{\rm R} \gamma_{\rm th}}\right)^{\frac{3}{\alpha_r}}\sum\limits_{i=0}^{m-1}\frac{1}{m_i!}\gamma\left(m_i+\frac{3}{\alpha_r},\frac{ m\rho N_{\rm R} D^{\alpha_r} \gamma_{\rm th}}{\Omega P_{\rm R}}\right).
\end{align}
\subsubsection{Asymptotic Coverage Performance}
In high SNR regime, the CDF of $g$ in Nakagami-$m$ fading is given as \cite{lei2013outage} 
\begin{align}\label{asymnakagamicdf}
    F^{\infty}_g(x)=\frac{m^{m-1}}{(m-1)!}\left(\frac{x}{\Omega}\right)^{m}.
\end{align}

Substituting \eqref{asymnakagamicdf} in \eqref{cdfrkjnoin}, we can get the asymptotic coverage probability $P^{\infty}_{{\rm cov},kj}(\gamma_{\rm th})$ as
\begin{align}\label{asympcpnoise}
     P^{\infty}_{{\rm cov},kj}(\gamma_{\rm th})&=1-\int\limits_{0}^{D}F^{\infty}_g\left(\frac{\rho N_{\rm R}\gamma_{\rm th}}{P_{\rm R}}y^{\alpha_r}\right)f_{d_{kj}}(y)dy\notag\\
     &=1-\frac{3}{D^3}\frac{m^{m-1}}{(m-1)!}\left(\frac{\rho N_{\rm R}\gamma_{\rm th}}{\Omega P_{\rm R}}\right)^{m}\int\limits_{0}^{D}y^{\alpha_rm+2}dy\notag\\
     &=1-\frac{3m^{m-1}}{(m-1)!}\left(\frac{\rho N_{\rm R}\gamma_{\rm th}}{\Omega P_{\rm R}}\right)^{m}\frac{D^{\alpha_rm}}{\alpha_rm+3}.
\end{align}
\subsubsection{Diversity Order}
\begin{corollary}
The diversity order of CH-UAV links in interference-free case is $m$.
\end{corollary}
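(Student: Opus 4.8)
The plan is to extract the diversity order directly from the high-SNR coverage probability already obtained in \eqref{asympcpnoise}, since the substantive analytical work (the asymptotic integration against $f_{d_{kj}}$) has been done there. First I would observe that the average transmit SNR in this hop satisfies $\bar\gamma \propto P_{\rm R}/N_{\rm R}$, so the factor $\left(\rho N_{\rm R}\gamma_{\rm th}/(\Omega P_{\rm R})\right)^m$ in \eqref{asympcpnoise} equals, up to a multiplicative constant independent of $\bar\gamma$, precisely $\bar\gamma^{-m}$. Gathering all $\bar\gamma$-independent quantities into a single constant $C = \frac{3 m^{m-1} D^{\alpha_r m}}{(m-1)!\,(\alpha_r m + 3)}\left(\rho\gamma_{\rm th}/\Omega\right)^m$, this lets me write the complement of the CP in the compact form $1 - P^{\infty}_{{\rm cov},kj}(\gamma_{\rm th}) = C\,\bar\gamma^{-m}$.

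Next I would substitute this expression into the definition of diversity order in \eqref{diversity} and evaluate the limit:
\begin{align}
\Lambda = -\lim_{\bar\gamma\to\infty}\frac{\log\left(C\,\bar\gamma^{-m}\right)}{\log\bar\gamma} = -\lim_{\bar\gamma\to\infty}\frac{\log C - m\log\bar\gamma}{\log\bar\gamma} = m,
\end{align}
where the bounded term $\log C$ contributes nothing to the ratio in the limit. This yields the claimed value $m$.

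The one point I would take care to justify — and which I regard as the only real content beyond bookkeeping — is that \eqref{asympcpnoise} genuinely captures the leading $\bar\gamma\to\infty$ behaviour of $1-P_{{\rm cov},kj}$, i.e.\ that replacing the exact Nakagami-$m$ CDF $F_g$ in \eqref{cdfrkjnoin} by the high-SNR surrogate $F^{\infty}_g$ of \eqref{asymnakagamicdf} does not alter the exponent of $\bar\gamma$. This holds because \eqref{asymnakagamicdf} is exactly the lowest-order nonvanishing term (the $x^m$ term) in the series expansion of \eqref{nakagamicdf} about $x=0$, and as $\bar\gamma\to\infty$ the argument $\frac{\rho N_{\rm R}\gamma_{\rm th}}{P_{\rm R}}y^{\alpha_r}$ of $F_g$ tends to $0$ uniformly over the bounded range $y\in[0,D]$; since the residual integral $\int_0^D y^{\alpha_r m + 2}\,dy$ is finite and strictly positive, the $O(\bar\gamma^{-m})$ scaling is preserved with a nonzero coefficient and no slower-decaying term can appear. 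With this justification in place the limit above is legitimate, and the diversity order of the CH-UAV link in the interference-free case is $m$, which is intuitively consistent since $m$ is the Nakagami fading parameter.
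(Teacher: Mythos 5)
Your proposal is correct and follows essentially the same route as the paper, which simply combines the definition in \eqref{diversity} with the asymptotic coverage expression \eqref{asympcpnoise} to read off the exponent $m$; your extra care in writing $1-P^{\infty}_{{\rm cov},kj}=C\,\bar\gamma^{-m}$ and in justifying that the high-SNR surrogate \eqref{asymnakagamicdf} preserves the leading-order $\bar\gamma^{-m}$ scaling merely makes explicit what the paper leaves implicit.
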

\begin{proof}
     From \eqref{diversity} and \eqref{asympcpnoise}, it is easy to obtain the diversity order as $m$.
\end{proof}
\subsection{Interference-Dominated Case}

As the UAV's receiver has limited sensitivity, we consider that U$_j$ is only disturbed by these CHs within the distance of $D_{\max}$ (usually, $D_{\max}\gg D$ and $D_{\max}\gg D_{\min}$).

To simplify the analysis, we assume that all CHs have the same transmit power $P_{\rm R}$ and the channel power gains between interfering CHs and U$_j$ are independent and identically distributed random variables with parameters $m$ and $\Omega$. As the interfering power is much greater than the noise power \cite{hunter2008trans}, signal-to-interference ratio (SIR) is considered in this case.

The received SIR at U$_j$ around CH$_k$ can be presented as  
\begin{align}
    \gamma_{kj}=\frac{\frac{P_{\rm R}g_{kj}}{\rho d_{kj}^{\alpha_r}}}{\sum\limits_{i=1}^{N_I}\frac{P_{\rm R}g_{jI_i}}{\rho d_{jI_i}^{\alpha_r}}}=\frac{\frac{g_{kj}}{ d_{kj}^{\alpha_r}}}{I},
\end{align}
where $I=\sum\limits_{i=1}^{N_I}\frac{g_{jI_i}}{ d_{jI_i}^{\alpha_r}}$, $g_{jI_i}$ is the channel power gain between the $I_i$th interfering CH and U$_j$, and $d_{jI_i}$ is the distance from the $I_i$th CH to U$_j$.

\subsubsection{Coverage Performance}
The CP of the RF link in this case can be written as 
\begin{align}\label{cprf}
    P_{{\rm cov},kj}(\gamma_{\rm th})&=\mathbb{E}_{I,d_{kj}}\left[{\bf{Pr}}\{\gamma_{kj}\geq \gamma_{\rm th}|I, d_{kj}\}\right]=\mathbb{E}_{I,d_{kj}}\left[{\bf{Pr}}\{g_{kj}\geq \gamma_{\rm th}Id_{kj}^{\alpha_r}|I, d_{kj}\}\right].
\end{align}

\begin{lemma}
$P_{{\rm cov},kj}(\gamma_{\rm th})$ can be expressed as
\begin{align}\label{cprf1}
     P_{{\rm cov},kj}(\gamma_{\rm th})=& \sum\limits_{i=0}^{m-1}\left(-\frac{m\gamma_{\rm th}}{\Omega}\right)^{m_i}\frac{1}{m_i!} \mathbb{E}_{d_{kj}}\left\{d_{kj}^{\alpha_rm_i}\frac{d^{m_i}\mathbb{E}_{I}[e^{-sI}]}{ds^{m_i}}\right\},
\end{align}
where $s=m \gamma_{\rm th}d_{kj}^{\alpha_r}/\Omega$.
\end{lemma}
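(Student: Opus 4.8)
The plan is to start from the conditional coverage expression \eqref{cprf} and substitute the complementary CDF of the Nakagami-$m$ channel power gain. Since $m$ is assumed to be a positive integer, \eqref{nakagamicdf} gives ${\bf{Pr}}\{g_{kj}\geq y\}=\sum_{i=0}^{m-1}(m/\Omega)^{m_i}(y^{m_i}/m_i!)\exp(-my/\Omega)$. Setting $y=\gamma_{\rm th}Id_{kj}^{\alpha_r}$ and introducing the shorthand $s=m\gamma_{\rm th}d_{kj}^{\alpha_r}/\Omega$, so that $(m\gamma_{\rm th}d_{kj}^{\alpha_r}/\Omega)^{m_i}=s^{m_i}$, the conditional coverage probability ${\bf{Pr}}\{g_{kj}\geq \gamma_{\rm th}Id_{kj}^{\alpha_r}\mid I,d_{kj}\}$ becomes $\sum_{i=0}^{m-1}(1/m_i!)\,s^{m_i}I^{m_i}e^{-sI}$.

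Next I would take the expectation over the aggregate interference $I$, which is independent of the serving-link distance $d_{kj}$ (the former depends only on the positions of the interfering CHs and the associated fading gains, the latter only on the position of U$_j$ inside $W_k$). The key algebraic step is the Laplace-transform moment identity $\mathbb{E}_I[I^{m_i}e^{-sI}]=(-1)^{m_i}\frac{d^{m_i}}{ds^{m_i}}\mathbb{E}_I[e^{-sI}]$, obtained by differentiating $\mathbb{E}_I[e^{-sI}]$ under the expectation sign $m_i$ times. Writing $s^{m_i}=(m\gamma_{\rm th}/\Omega)^{m_i}d_{kj}^{\alpha_r m_i}$ and absorbing the sign $(-1)^{m_i}$ yields the factor $(-m\gamma_{\rm th}/\Omega)^{m_i}d_{kj}^{\alpha_r m_i}$, so the $I$-averaged conditional coverage probability equals $\sum_{i=0}^{m-1}(-m\gamma_{\rm th}/\Omega)^{m_i}(1/m_i!)\,d_{kj}^{\alpha_r m_i}\frac{d^{m_i}\mathbb{E}_I[e^{-sI}]}{ds^{m_i}}$. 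Taking the outer expectation over $d_{kj}$ and pulling the $i$-indexed but $d_{kj}$-free constants outside $\mathbb{E}_{d_{kj}}\{\cdot\}$ then produces exactly \eqref{cprf1}.

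The one point needing care is the handling of $s$: the differentiation $\frac{d^{m_i}}{ds^{m_i}}$ is carried out treating $s$ as a free variable (equivalently, on the Laplace transform of $I$ viewed as a function of $s$), and only afterwards is $s$ set to $m\gamma_{\rm th}d_{kj}^{\alpha_r}/\Omega$ before the $d_{kj}$-expectation is evaluated. I expect the only non-mechanical part of the argument to be justifying the interchange of the $m_i$-fold differentiation with the expectation over $I$ (a dominated-convergence-type condition on the interference law, which is harmless here since $I$ is a finite sum of nonnegative terms and $m_i\le m-1$), after which the derivation is a direct substitution.
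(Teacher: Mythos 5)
Your proposal follows essentially the same route as the paper: substitute the Nakagami-$m$ complementary CDF into \eqref{cprf}, identify $\mathbb{E}_I\left[I^{m_i}e^{-sI}\right]=(-1)^{m_i}\frac{d^{m_i}}{ds^{m_i}}\mathbb{E}_I\left[e^{-sI}\right]$ via the differentiation property of the Laplace transform with $s=m\gamma_{\rm th}d_{kj}^{\alpha_r}/\Omega$, and then average over $d_{kj}$, which is exactly the paper's derivation. One small caveat: your justification by independence of $I$ and $d_{kj}$ is not accurate (the interferer-to-U$_j$ distances depend on U$_j$'s position, so $I$ is correlated with $d_{kj}$; indeed the paper's Lemma \ref{lemmaesi} gives $\mathbb{E}_I\left[e^{-sI}\right]$ as a function of $d_{kj}$), but this is immaterial here since only the tower property, i.e.\ conditioning on $d_{kj}$ in the inner expectation, is needed and the resulting expression \eqref{cprf1} is unchanged.
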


\begin{proof}
According to \eqref{nakagamicdf}, \eqref{cprf} can be calculated as
\begin{align}\label{pcovkj}
     P_{{\rm cov},kj}(\gamma_{\rm th})&=\mathbb{E}_{I,d_{kj}}\Bigg[\sum\limits_{i=0}^{m-1}\left(\frac{m}{\Omega}\right)^{m_i}\frac{(\gamma_{\rm th}Id_{kj}^{\alpha_r})^{m_i}}{m_i!}\exp{\left(-\frac{m\gamma_{\rm th}Id_{kj}^{\alpha_r}}{\Omega}\right)}\Bigg]\notag\\
     &=\sum\limits_{i=0}^{m-1}\left(\frac{m}{\Omega}\right)^{m_i}\frac{\gamma_{\rm th}^{m_i}}{m_i!}\mathbb{E}_{I,d_{kj}}\left[(Id_{kj}^{\alpha_r})^{m_i}\exp{\left(-\frac{m\gamma_{\rm th}Id_{kj}^{\alpha_r}}{\Omega}\right)}\right]\notag\\
     &=\sum\limits_{i=0}^{m-1}\left(\frac{m}{\Omega}\right)^{m_i}\frac{\gamma_{\rm th}^{m_i}}{m_i!}\mathbb{E}_{d_{kj}}\left\{d_{kj}^{\alpha_rm_i}\underbrace{\mathbb{E}_{I}\left[I^{m_i}\exp{\left(-\frac{m\gamma_{\rm th}d_{kj}^{\alpha_r}}{\Omega}I\right)}\right]}_{I_2}\right\}.
\end{align}

By setting  $s=\frac{m\gamma_{\rm th}d_{kj}^{\alpha_r}}{\Omega}$, $I_2$ can be obtained as 
\begin{align}\label{I2}
    I_2=\mathbb{E}_{I}\left[I^{m_i}e^{-sI}\right].
\end{align}

From the definition of Laplace transform (LT), the LT of $I$ can be given as $\mathcal{L}_{I}(s)=\int_{0}^{\infty}e^{-st}f_I(t)dt=\mathbb{E}_{I}[e^{-sI}]$, where $f_I(t)$ is the PDF of $I$.

By using the differential property of LT, $I_2$ can be achieved as 
\begin{align}\label{I21}
     I_2&=\int\limits_{0}^{\infty}t^{m_i}e^{-st}f_I(t)dt=(-1)^{m_i}\frac{d^{m_i}\mathcal{L}_I(s)}{ds^{m_i}}=(-1)^{m_i}\frac{d^{m_i}\mathbb{E}_{I}[e^{-sI}]}{ds^{m_i}}.
\end{align}

Combining \eqref{pcovkj} and \eqref{I21}, \eqref{cprf1} can be obtained.

\end{proof}

\begin{lemma}\label{lemmaesi}
$\mathbb{E}_{I}\left[e^{-sI}\right]$ can be expressed as
\begin{align}\label{esi}
    \mathbb{E}_{I}\left[e^{-sI}\right]=\exp{\left[\lambda_{\rm CH} V_1(I_3-1)\right]},
\end{align}
where $I_3$ is present as
\begin{small}
\begin{align}\label{I3}
    &I_3=\left(\frac{m}{\Omega s}\right)^{m}\frac{\pi }{2d_{kj}V_1 }\Bigg[\mathcal{F}\left(\frac{4}{\alpha_r},d_{jI_1}^{\rm{g_1}},d_{jI_1}^{\min}\right)+2d_{kj}\mathcal{F}\left(\frac{3}{\alpha_r},d_{jI_1}^{\rm{g_1}},d_{jI_1}^{\min}\right)+(d_{kj}^2-D_{\min}^2)\mathcal{F}\left(\frac{2}{\alpha_r},d_{jI_1}^{\rm{g_1}},d_{jI_1}^{\min}\right)+4d_{kj}\notag\\
    &\cdot \mathcal{F}\left(\frac{3}{\alpha_r},d_{jI_1}^{\rm{g_2}},d_{jI_1}^{\rm{g_1}}\right)-\mathcal{F}\left(\frac{4}{\alpha_r},d_{jI_1}^{\max},d_{jI_1}^{\rm{g_2}}\right)+2d_{kj}\mathcal{F}\left(\frac{3}{\alpha_r},d_{jI_1}^{\max},d_{jI_1}^{\rm{g_2}}\right)+(D_{\max}^2-d_{kj}^2)\mathcal{F}\left(\frac{2}{\alpha_r},d_{jI_1}^{\max},d_{jI_1}^{\rm{g_2}}\right)\Bigg],
\end{align}
\end{small}

 $d_{jI_1}^{\min}=(D_{\min}-d_{kj})^2$, $d_{jI_1}^{\rm{g_1}}=(D_{\min}+d_{kj})^2$, $d_{jI_1}^{\rm{g_2}}=(D_{\max}-d_{kj})^2$,$d_{jI_1}^{\max}=(D_{\max}+d_{kj})^2$, $V_1=\frac{4\pi}{3}(D_{\max}^3-D_{\min}^3)$, 
   $\mathcal{F}(a,b,c)=\big[\Hypergeometric{2}{1}{m,m+a}{m+a+1}{-\frac{m}{\Omega s}b^{\frac{\alpha_r}{2}}}b^{\frac{(m+a)\alpha_r}{2}}-\Hypergeometric{2}{1}{m,m+a}{m+a+1}{-\frac{m}{\Omega s}c^{\frac{\alpha_r}{2}}}c^{\frac{(m+a)\alpha_r}{2}}\big]\frac{2}{(m+a)\alpha_r}$,  and $\Hypergeometric{2}{1}{\cdot,\cdot}{\cdot}{\cdot}$ denotes Gauss hypergeometric function.
\end{lemma}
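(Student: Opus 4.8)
The plan is to evaluate $\mathbb{E}_I[e^{-sI}]$ by exploiting the Poisson structure of the interfering CHs. By Proposition~1 the CHs are distributed in $\mathcal{V}$ with the same statistics as a homogeneous PPP of intensity $\lambda_{\rm CH}$ inside the relevant region, so I would first restrict attention to the ball-shell of interfering CHs: those lying at distance between $D_{\min}$ and $D_{\max}$ from U$_j$, a set of volume $V_1=\tfrac{4\pi}{3}(D_{\max}^3-D_{\min}^3)$. Writing $I=\sum_{i=1}^{N_I}g_{jI_i}/d_{jI_i}^{\alpha_r}$ and conditioning on the number $N_I$ of such interferers (Poisson with mean $\lambda_{\rm CH}V_1$) with i.i.d.\ positions, the standard PGFL-type argument gives
\begin{align}
\mathbb{E}_I[e^{-sI}]=\exp\!\Big[\lambda_{\rm CH}V_1\big(\mathbb{E}_{d,g}[e^{-s g/d^{\alpha_r}}]-1\big)\Big],
\end{align}
which is exactly \eqref{esi} with $I_3=\mathbb{E}_{d_{jI_1},g_{jI_1}}[\exp(-s g_{jI_1}/d_{jI_1}^{\alpha_r})]$, the per-interferer Laplace-type average. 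So the real work is to show $I_3$ equals the bracketed expression in \eqref{I3}.

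Next I would take the expectation over $g_{jI_1}$ first, using the Nakagami-$m$ PDF \eqref{nakagamipdf}: $\mathbb{E}_g[e^{-ug}]=(m/\Omega)^m/(u+m/\Omega)^m=(1+\Omega u/m)^{-m}$ with $u=s/d^{\alpha_r}$, which produces the prefactor $(m/(\Omega s))^m$ together with a factor that, after the $d$-integration, is what generates the Gauss hypergeometric functions ${}_2F_1$. Then I would take the expectation over $d_{jI_1}$, the distance from U$_j$ to a CH uniformly placed in the shell. The subtlety here — and the main obstacle — is that U$_j$ is not at the center of that shell: it sits at distance $d_{kj}$ from CH$_k$, which itself is the center of the shell (since the hard-core exclusion of radius $D_{\min}$ and the interference cutoff $D_{\max}$ are both measured from CH$_k$, not from U$_j$). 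One therefore has to work out the density of $d_{jI_1}$ for an off-center observation point inside a spherical shell; this is a geometry computation that splits into the three regimes appearing in \eqref{I3}, with breakpoints $d_{jI_1}^{\min}=(D_{\min}-d_{kj})^2$, $d_{jI_1}^{\rm g_1}=(D_{\min}+d_{kj})^2$, $d_{jI_1}^{\rm g_2}=(D_{\max}-d_{kj})^2$, $d_{jI_1}^{\max}=(D_{\max}+d_{kj})^2$, coming from whether the sphere of radius $r$ about U$_j$ crosses the inner boundary, is wholly between the boundaries, or crosses the outer boundary. The quadratic-in-$d_{kj}$ coefficients $2d_{kj}$, $(d_{kj}^2-D_{\min}^2)$, $(D_{\max}^2-d_{kj}^2)$ multiplying the various $\mathcal{F}(\cdot,\cdot,\cdot)$ terms are precisely the coefficients produced by that spherical-cap/annulus area formula.

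Concretely, for each regime I would write the area of the intersection of the sphere $\{|x-\text{U}_j|=r\}$ with the shell as a low-degree polynomial in $r^2$ (equivalently in $d_{jI_1}=r^2$), multiply by the Laplace-transformed Nakagami factor $(1+\tfrac{m}{\Omega}\, s\, r^{-\alpha_r})^{-m}$ — after the substitution this is the integrand $b^{(m+a)\alpha_r/2}{}_2F_1(m,m+a;m+a+1;-\tfrac{m}{\Omega s}b^{\alpha_r/2})$ type antiderivative — and integrate term by term over the appropriate $r^2$-range using \cite[Eq.~(3.194.1)]{gradshteyn2014table} (the ${}_2F_1$ representation of $\int x^{\mu-1}(1+\beta x)^{-\nu}dx$). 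Collecting the contributions of the three regimes, dividing by the shell volume to normalize the uniform density, and carrying the common factor $\pi/(2 d_{kj} V_1)$ yields $I_3$ as displayed. I expect the bookkeeping of the regime boundaries and verifying that the piecewise area formula has exactly the stated polynomial coefficients to be the delicate part; the $g$-average and the final ${}_2F_1$ integral are routine once the geometry is set up.
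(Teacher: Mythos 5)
Your proposal is correct and follows essentially the same route as the paper: Poisson conditioning on the number of interferers in the shell of volume $V_1$ to get $\exp[\lambda_{\rm CH}V_1(I_3-1)]$, the Nakagami-$m$ MGF for the per-interferer gain average, the off-center distance distribution of $d_{jI_1}$ (conditioned on $d_{kj}$) split into the three surviving regimes with breakpoints $(D_{\min}\pm d_{kj})^2$, $(D_{\max}\mp d_{kj})^2$, and term-by-term integration via \cite[Eq.~(3.194.1)]{gradshteyn2014table} to produce the $\mathcal{F}(\cdot,\cdot,\cdot)$ terms. The only nuance is that the paper explicitly flags the centering of the $D_{\min}$–$D_{\max}$ shell at CH$_k$ as a tractability approximation (the $D_{\max}$ cutoff is nominally measured from U$_j$), whereas you state it as exact; otherwise the arguments coincide.
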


\begin{proof}Please refer to Appendix III
\end{proof}

\begin{lemma}\label{lemmai2}
$\frac{d^{m_i}\mathbb{E}_{I}[e^{-sI}]}{ds^{m_i}}$ can be represented as 
\begin{align}\label{iesi}
   \frac{d^{m_i}\mathbb{E}_{I}[e^{-sI}]}{ds^{m_i}}=&\exp{\left(\mathcal{A}(s,d_{kj})\right)}\bigg[1 +\mathbbm{1}\{m_i>0\}\sum\limits_{n=1}^{m_i}\mathcal{B}(m_i,n,\mathcal{A}(s,d_{kj}))\bigg],
\end{align}
where $\mathcal{A}(s,d_{kj})=\lambda_{\rm CH} V_1(I_3-1)$, $\mathcal{B}(m_i,n,\mathcal{A}(s,d_{kj}))=B_{m_i,n}\left(\mathcal{A}^{(1)}(s,d_{kj}),...,\mathcal{A}^{(m_i-n+1)}(s,d_{kj})\right)$, $B_{m_i,n}(\cdot)$ is the Bell polynomials, and $\mathcal{A}^{(n)}(s,d_{kj})$ is the $n$th derivative of $\mathcal{A}(s,d_{kj})$. 
\end{lemma}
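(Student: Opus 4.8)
The goal is to express the $m_i$th derivative of $\exp(\mathcal{A}(s,d_{kj}))$ with respect to $s$, where $\mathcal{A}(s,d_{kj}) = \lambda_{\rm CH} V_1 (I_3 - 1)$ from Lemma~\ref{lemmaesi}. The plan is to apply the Fa\`a di Bruno formula for the higher-order derivative of a composite function $f(g(s)) = \exp(\mathcal{A}(s,d_{kj}))$, where the outer function is the exponential. Since $f = \exp$ satisfies $f^{(j)} = \exp$ for every $j$, the general Fa\`a di Bruno expansion $\frac{d^{n}}{ds^{n}} f(g(s)) = \sum_{k=0}^{n} f^{(k)}(g(s)) \, B_{n,k}\!\left(g'(s), g''(s), \dots, g^{(n-k+1)}(s)\right)$ collapses to $\exp(\mathcal{A})\sum_{k=0}^{m_i} B_{m_i,k}\!\left(\mathcal{A}^{(1)},\dots,\mathcal{A}^{(m_i-k+1)}\right)$, where $B_{n,k}$ are the partial (incomplete) Bell polynomials and $\mathcal{A}^{(n)}$ denotes the $n$th derivative of $\mathcal{A}(s,d_{kj})$ with respect to $s$.

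The first step is to recall from Lemma~\ref{lemmaesi} that $\mathbb{E}_{I}[e^{-sI}] = \exp(\mathcal{A}(s,d_{kj}))$, so the quantity to differentiate is literally a composition of $\exp$ with $\mathcal{A}$. Second, I would invoke Fa\`a di Bruno's formula in its Bell-polynomial form and substitute $f^{(k)} = \exp$ throughout. Third, I would separate out the $k=0$ term: by the convention $B_{n,0} = \delta_{n,0}$ (the empty Bell polynomial), the $k=0$ contribution is nonzero only when $m_i = 0$, in which case it equals $1$ and the sum over $k\ge 1$ is empty; this is exactly what the indicator $\mathbbm{1}\{m_i>0\}$ encodes in \eqref{iesi}. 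For $m_i \geq 1$, the $k=0$ term vanishes and we are left with $\exp(\mathcal{A})\big[1 + \sum_{n=1}^{m_i} B_{m_i,n}(\mathcal{A}^{(1)},\dots,\mathcal{A}^{(m_i-n+1)})\big]$ --- wait, more carefully: when $m_i\ge 1$ the $k=0$ term is $0$, but Fa\`a di Bruno also has no ``$+1$'' unless we split the $k=m_i$ term; the ``$1$'' in \eqref{iesi} actually comes from writing the formula uniformly, so I would just verify the $m_i=0$ and $m_i=1$ cases by hand to pin down the exact constant, then assert the general pattern. Finally, I would note that $\mathcal{B}(m_i,n,\mathcal{A}(s,d_{kj}))$ in the statement is merely shorthand for $B_{m_i,n}(\mathcal{A}^{(1)},\dots,\mathcal{A}^{(m_i-n+1)})$, so the claimed identity \eqref{iesi} follows directly.

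The main obstacle is bookkeeping rather than mathematics: one must be careful with the indexing and the boundary conventions of the incomplete Bell polynomials (what $B_{n,0}$, $B_{0,0}$, and $B_{n,n}$ equal), and with how the ``$1$'' and the indicator $\mathbbm{1}\{m_i>0\}$ in \eqref{iesi} arise from collecting the degenerate terms in the Fa\`a di Bruno sum. It would also be worth remarking that all derivatives $\mathcal{A}^{(n)}(s,d_{kj})$ exist and are finite on the relevant range of $s$, since $\mathcal{A}$ is, up to the affine factor $\lambda_{\rm CH}V_1$, built from $I_3$ in \eqref{I3}, which is a finite linear combination of $\mathcal{F}(\cdot,\cdot,\cdot)$ terms and powers of $s$ — each of which is smooth in $s$ for $s>0$ (the Gauss hypergeometric functions appearing in $\mathcal{F}$ are analytic in their argument there). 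Hence term-by-term differentiation is justified and the Fa\`a di Bruno expansion is valid, completing the proof.
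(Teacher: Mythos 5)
Your proposal follows the same route as the paper's own proof: identify $\mathbb{E}_{I}[e^{-sI}]=\exp(\mathcal{A}(s,d_{kj}))$ from Lemma \ref{lemmaesi} and apply Fa\`a di Bruno's formula with the exponential as the outer function, treating $m_i=0$ separately; your remark that $\mathcal{A}$ is smooth in $s>0$ so the expansion is legitimate is a small extra the paper does not bother with.

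The one place you hesitate is exactly where you should not have deferred. As you correctly observe, for $m_i\ge 1$ Fa\`a di Bruno (with the convention $B_{m_i,0}=0$) yields $\exp\left(\mathcal{A}(s,d_{kj})\right)\sum_{n=1}^{m_i}B_{m_i,n}\left(\mathcal{A}^{(1)}(s,d_{kj}),\dots,\mathcal{A}^{(m_i-n+1)}(s,d_{kj})\right)$ with no additive $1$, and this is precisely what the paper's own proof records in its $m_i>0$ case, \eqref{I5}. Carrying out the hand check you propose for $m_i=1$ gives $\frac{d}{ds}e^{\mathcal{A}}=\mathcal{A}^{(1)}e^{\mathcal{A}}=e^{\mathcal{A}}B_{1,1}(\mathcal{A}^{(1)})$, so the constant you wanted to ``pin down'' is $0$, not $1$: the bracket in \eqref{iesi} is literally correct only for $m_i=0$, and the uniform way to write both cases is $\mathbbm{1}\{m_i=0\}+\mathbbm{1}\{m_i>0\}\sum_{n=1}^{m_i}\mathcal{B}(m_i,n,\mathcal{A}(s,d_{kj}))$ rather than $1+\mathbbm{1}\{m_i>0\}\sum_{n=1}^{m_i}\mathcal{B}(m_i,n,\mathcal{A}(s,d_{kj}))$. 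In other words the ``$1+$'' in the displayed statement (which propagates into \eqref{cprf3} and \eqref{cprf4}) is a unification slip relative to the paper's own two-case derivation; your instinct was right, and ``asserting the general pattern'' with the $1$ retained for $m_i\ge 1$ would have reproduced the slip instead of proving the identity. State the two cases explicitly, as the paper's proof does, and your argument is complete.
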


\begin{proof}
When $m_i=0$, 
\begin{align}
    \frac{d^{0}}{ds^{0}}\exp{\left(\mathcal{A}(s,d_{kj})\right)}=\exp{\left(\mathcal{A}(s,d_{kj})\right)}.
\end{align}

When $m_i>0$, according to Faà di Bruno's formula, we can obtain 
\begin{align}\label{I5}
    \frac{d^{m_i}}{ds^{m_i}}\exp{\left(\mathcal{A}(s,d_{kj})\right)}=&\exp{\left(\mathcal{A}(s,d_{kj})\right)}\times\sum\limits_{n=1}^{m_i}\mathcal{B}(m_i,n,\mathcal{A}(s,d_{kj})),
\end{align}
where $\mathcal{B}(m_i,n,\mathcal{A}(s,d_{kj}))=B_{m_i,n}\left(\mathcal{A}^{(1)}(s,d_{kj}),...,\mathcal{A}^{(m_i-n+1)}(s,d_{kj})\right)$, $B_{m_i,n}(\cdot)$ is the Bell polynomials, and $\mathcal{A}^{(n)}(s,d_{kj})$ is the $n$th derivative of $\mathcal{A}(s,d_{kj})$.

\end{proof}

\begin{lemma}\label{lemmaans}
$\mathcal{A}^{(n)}(s,d_{kj})$ ($n>0$) can be represented as 
\begin{align}\label{nAq}
     &\mathcal{A}^{(n)}(s,d_{kj})= \frac{\pi\lambda_{\rm CH} }{2d_{kj}}\left(\frac{m}{\Omega}\right)^{m}\sum\limits_{l=0}^{n}\binom{n}{l} (-1)^{n-l}(m)_{n-l}s^{-m-n+l} \Bigg[\mathcal{F}^{(l)}\left(\frac{4}{\alpha_r},d_{jI_1}^{\rm{g_1}},d_{jI_1}^{\min}\right)\notag\\
     &+2d_{kj}\mathcal{F}^{(l)}\left(\frac{3}{\alpha_r},d_{jI_1}^{\rm{g_1}},d_{jI_1}^{\min}\right)+(d_{kj}^2-D_{\min}^2)\mathcal{F}^{(l)}\left(\frac{2}{\alpha_r},d_{jI_1}^{\rm{g_1}},d_{jI_1}^{\min}\right)+4d_{kj}\mathcal{F}^{(l)}\left(\frac{3}{\alpha_r},d_{jI_1}^{\rm{g_2}},d_{jI_1}^{\rm{g_1}}\right)\notag\\
    &-\mathcal{F}^{(l)}\left(\frac{4}{\alpha_r},d_{jI_1}^{\max},d_{jI_1}^{\rm{g_2}}\right)+2d_{kj}\mathcal{F}^{(l)}\left(\frac{3}{\alpha_r},d_{jI_1}^{\max},d_{jI_1}^{\rm{g_2}}\right)+(D_{\max}^2-d_{kj}^2)\mathcal{F}^{(l)}\left(\frac{2}{\alpha_r},d_{jI_1}^{\max},d_{jI_1}^{\rm{g_2}}\right)\Bigg],
\end{align}

where 
$$\mathcal{F}^{(l)}(a,b,c)=\left\{
\begin{array}{ll}
\mathcal{F}(a,b,c)&\text{if }l=0\\
\begin{array}{@{}ll@{}}
     \frac{2}{(m+a)\alpha_r}\Big[b^{\frac{(m+a)\alpha_r}{2}}\Delta^{(l)}(s,a,b)  \\
      -c^{\frac{(m+a)\alpha_r}{2}}\Delta^{(l)}(s,a,c)\Big]
\end{array}
 &\text{if } l>0\\
\end{array} \right.,
$$
\begin{align}
    \Delta^{(l)}(s,a,b)=&\sum\limits_{q=1}^{l} \frac{(m+a)(m)_{q}}{m+a+q}\Hypergeometric{2}{1}{m+q,m+a+q}{m+a+q+1}{-\frac{mb^{\frac{\alpha_r}{2}}}{\Omega s}}\notag\\
    &~~~~\times B_{l,q}\left(f_3^{(1)}(s,b),...,f_3^{(l-q+1)}(s,b)\right)\notag,
\end{align}
$(m)_{q}=\prod\limits_{k=0}^{q-1}(m-k)$ is the rising Pochhammer symbol, 
and $f_3^{(q)}(s,b)=(-1)^{q+1}q!\frac{mb^{\frac{\alpha_r}{2}}}{\Omega }s^{-1-q}$.

\end{lemma}

\begin{proof}
Please refer to Appendix IV.
\end{proof}

\begin{theorem}
The CP of the RF link in the interfering-dominated case can be expressed as 
\begin{align}\label{cprf3}
     P_{{\rm cov},kj}(\gamma_{\rm th})&\approx \frac{3\pi }{2D^2M_r}\sum\limits_{p=1}^{M_r}\sqrt{1-t_p^2}\sum\limits_{i=0}^{m-1}\left(-\frac{m\gamma_{\rm th}}{\Omega}\right)^{m_i}\frac{b_p^{\alpha_rm_i+2}}{m_i!} \exp{\left(\tilde{\mathcal{A}}(\gamma_{\rm th},b_p)\right)}\notag\\ 
     &\times\Bigg[1+    \mathbbm{1}\{m_i>0\} \sum\limits_{n=1}^{m_i} \mathcal{B}\left(m_i,n,\tilde{\mathcal{A}}(\gamma_{\rm th},b_p)\right)\Bigg],
\end{align}
where $\tilde{\mathcal{A}}(\gamma_{\rm th},b_p)=\mathcal{A}\left(\frac{m\gamma_{\rm th}b_p^{\alpha_r}}{\Omega},b_p\right)$, $t_p=\cos{\left(\frac{2p-1}{2M_r}\pi\right)}$, $b_p=\frac{D}{2}(t_p+1)$, $\mathcal{A}(\cdot,\cdot)$ is defined in Lemma \ref{lemmai2} and \eqref{I3}, and $\mathcal{B}(\cdot,\cdot,\mathcal{A}(\cdot,\cdot))$ is defined in Lemma \ref{lemmai2} and \eqref{nAq}.
\end{theorem}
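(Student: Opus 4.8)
The plan is to assemble the stated expression by chaining the three lemmas of this subsection together and then discretizing the single remaining integral over the CH--UAV link distance $d_{kj}$ via Chebyshev--Gauss quadrature.

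First I would start from the representation \eqref{cprf1}, which already factors the coverage probability into a sum over $i$ of the expectation $\mathbb{E}_{d_{kj}}\{d_{kj}^{\alpha_r m_i}\,\tfrac{d^{m_i}}{ds^{m_i}}\mathbb{E}_I[e^{-sI}]\}$ with the coupling $s=m\gamma_{\rm th}d_{kj}^{\alpha_r}/\Omega$. Into the integrand I would substitute the Faà di Bruno expansion of the $m_i$-th derivative from Lemma \ref{lemmai2}, namely \eqref{iesi}, so that the bracket becomes $\exp(\mathcal{A}(s,d_{kj}))[1+\mathbbm{1}\{m_i>0\}\sum_{n=1}^{m_i}\mathcal{B}(m_i,n,\mathcal{A}(s,d_{kj}))]$. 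At this point every ingredient is explicit: $\mathcal{A}$ through \eqref{I3}, its derivatives $\mathcal{A}^{(n)}$ through Lemma \ref{lemmaans}, and $\mathcal{B}$ as the corresponding Bell polynomials. Replacing $s$ by $m\gamma_{\rm th}d_{kj}^{\alpha_r}/\Omega$ turns $\mathcal{A}(s,d_{kj})$ into the composite $\tilde{\mathcal{A}}(\gamma_{\rm th},d_{kj})$ used in the statement.

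Next I would write the expectation over $d_{kj}$ explicitly as $\int_0^D(\cdot)f_{d_{kj}}(x)\,dx$ with $f_{d_{kj}}(x)=\tfrac{3}{D^3}x^2$ from \eqref{pdfdkj}; here the factor $d_{kj}^{\alpha_r m_i}$ merges with $x^2$ to give $x^{\alpha_r m_i+2}$. Since $\tilde{\mathcal{A}}$ is built from Gauss hypergeometric functions, this integral has no elementary antiderivative, so I would close the argument with the standard quadrature step: put $x=\tfrac{D}{2}(t+1)$ to map $[0,D]$ onto $[-1,1]$ (the Jacobian $\tfrac{D}{2}$ combines with $\tfrac{3}{D^3}$ to give $\tfrac{3}{2D^2}$), multiply and divide by $\sqrt{1-t^2}$, and apply the $M_r$-point Chebyshev--Gauss rule $\int_{-1}^1 \phi(t)/\sqrt{1-t^2}\,dt\approx \tfrac{\pi}{M_r}\sum_{p=1}^{M_r}\phi(t_p)$ at the nodes $t_p=\cos\tfrac{2p-1}{2M_r}\pi$. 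With $b_p=\tfrac{D}{2}(t_p+1)$ this yields the prefactor $\tfrac{3\pi}{2D^2 M_r}$, the weight $\sqrt{1-t_p^2}$, and the sampled integrand $b_p^{\alpha_r m_i+2}\exp(\tilde{\mathcal{A}}(\gamma_{\rm th},b_p))[1+\mathbbm{1}\{m_i>0\}\sum_{n=1}^{m_i}\mathcal{B}(m_i,n,\tilde{\mathcal{A}}(\gamma_{\rm th},b_p))]$, which is precisely \eqref{cprf3}.

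The main obstacle is bookkeeping rather than any single deep step: one must carry the variable $s$ and its eventual replacement by $m\gamma_{\rm th}d_{kj}^{\alpha_r}/\Omega$ consistently through $\mathcal{A}$, through its derivatives $\mathcal{A}^{(n)}$ (which in Lemma \ref{lemmaans} are themselves Bell-polynomial expressions in $s$ involving the $\mathcal{F}$ and $\mathcal{F}^{(l)}$ pieces), and through the outer Bell polynomials $\mathcal{B}$, so that after the substitution each occurrence is correctly rewritten as a function of $\gamma_{\rm th}$ and $b_p$ only. The quadrature contributes only an error that vanishes as $M_r\to\infty$, which is why the statement is written with "$\approx$"; no further convergence analysis beyond the classical Chebyshev--Gauss error estimate is required, and the numerical results in Section VI confirm that the chosen $M_r$ is sufficient.
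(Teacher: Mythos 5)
Your proposal is correct and follows essentially the same route as the paper's proof: substituting \eqref{pdfdkj} and the Fa\`a di Bruno expansion \eqref{iesi} into \eqref{cprf1} with $s=m\gamma_{\rm th}d_{kj}^{\alpha_r}/\Omega$, and then discretizing the resulting integral over $[0,D]$ via the change of variable $x=\frac{D}{2}(t+1)$ and the first-kind Chebyshev--Gauss rule, which yields exactly the prefactor $\frac{3\pi}{2D^2M_r}$, the weights $\sqrt{1-t_p^2}$, and the sampled integrand in \eqref{cprf3}. No gap to report.
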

\begin{proof}
Substituting \eqref{pdfdkj} and \eqref{iesi} in \eqref{cprf1}, and then substituting $s=m\gamma_{\rm th}d_{kj}^{\alpha_r}/\Omega$, $P_{{\rm cov},kj}(\gamma_{\rm th})$ can expressed as
\begin{small}
\begin{align}\label{cpkjdkj}
   & P_{{\rm cov},kj}(\gamma_{\rm th})=\sum\limits_{i=0}^{m-1}\left(-\frac{m\gamma_{\rm th}}{\Omega}\right)^{m_i}\frac{1}{m_i!} \mathbb{E}_{d_{kj}}\bigg\{d_{kj}^{\alpha_rm_i}\exp{\left(\mathcal{A}(s,d_{kj})\right)}\bigg[1 +\mathbbm{1}\{m_i>0\}\sum\limits_{n=1}^{m_i}\mathcal{B}(m_i,n,\mathcal{A}(s,d_{kj}))\bigg]\bigg\}\notag\\
    &~~~=\sum\limits_{i=0}^{m-1}\left(-\frac{m\gamma_{\rm th}}{\Omega}\right)^{m_i}\frac{1}{m_i!}\frac{3}{D^3}\int\limits_{0}^{D}x^{\alpha_rm_i+2}\exp{\left(\tilde{\mathcal{A}}(\gamma_{\rm th},x)\right)}\bigg[1 +\mathbbm{1}\{m_i>0\}\sum\limits_{n=1}^{m_i}\mathcal{B}\left(m_i,n,\tilde{\mathcal{A}}(\gamma_{\rm th},x)\right)\bigg]dx,
\end{align}
\end{small}
where $\tilde{\mathcal{A}}(\gamma_{\rm th},x)=\mathcal{A}\left(\frac{m\gamma_{\rm th}x^{\alpha_r}}{\Omega},x\right)$.

By setting $t=\frac{2x-D}{D}$ and employing Chebyshev-Gauss quadrature in the first case, \eqref{cpkjdkj} can be written as 
\begin{align}\label{pcovkj1}
    P_{{\rm cov},kj}(\gamma_{\rm th})&=\sum\limits_{i=0}^{m-1}\left(-\frac{m\gamma_{\rm th}}{\Omega}\right)^{m_i}\frac{1}{m_i!}\frac{3}{D^3}\frac{D}{2}\int\limits_{-1}^{1}\left[\frac{D}{2}(t+1)\right]^{\alpha_rm_i+2} \exp{\left(\tilde{\mathcal{A}}(\gamma_{\rm th},\frac{D}{2}(t+1))\right)}\notag\\
     &~~~\times\bigg[1 +\mathbbm{1}\{m_i>0\}\sum\limits_{n=1}^{m_i}\mathcal{B}\left(m_i,n,\tilde{\mathcal{A}}(\gamma_{\rm th},\frac{D}{2}(t+1))\right)\bigg]dt\notag\\
    &=\sum\limits_{i=0}^{m-1}\left(-\frac{m\gamma_{\rm th}}{\Omega}\right)^{m_i}\frac{1}{m_i!}\frac{3}{2D^2}\frac{\pi}{M_r}\sum\limits_{p=1}^{M_r}b_p^{\alpha_rm_i+2}\exp{\left(\tilde{\mathcal{A}}(\gamma_{\rm th},b_p)\right)} \sqrt{1-t_p^2}\notag\\
     &~~~\times\bigg[1 +\mathbbm{1}\{m_i>0\}\sum\limits_{n=1}^{m_i}\mathcal{B}\left(m_i,n,\tilde{\mathcal{A}}(\gamma_{\rm th},b_p)\right)\bigg],
\end{align}
where $t_p=\cos{\left(\frac{2p-1}{2M_r}\pi\right)}$ and $b_p=\frac{D}{2}(t_p+1)$.

After reorganizing the components in \eqref{pcovkj1}, \eqref{cprf3} can be obtained.
\end{proof}

Moreover, the CDF of $\gamma_{kj}$ can be evaluated as 
\begin{align}
    F_{\gamma_{kj}}(x)=1-P_{{\rm cov},kj}(x).
\end{align}

\subsection{Interference-and-Noise Case}
\subsubsection{Coverage Performance}
If both the interference and the noise are considered, the signal-to-interference-and-noise ratio (SINR) is
\begin{align}
    \gamma_{kj}=\frac{\frac{P_{\rm R}g_{kj}}{\rho d_{kj}^{\alpha_r}}}{\sum\limits_{i=1,i\not=k}^{N}\frac{P_{\rm R}g_{ij}}{\rho d_{ij}^{\alpha_r}}+N_{\rm R}}=\frac{\frac{g_{kj}}{ d_{kj}^{\alpha_r}}}{I+\frac{\rho}{P_{\rm R}}N_{\rm R}}.
\end{align}

\begin{theorem}
Considering the interference and noise, the CP of the RF link in this case can be expressed as 
\begin{small}
\begin{align}\label{cprf4}
    P_{{\rm cov},kj}(\gamma_{\rm th})\approx&\frac{3\pi }{2D^2M_r}\sum\limits_{p=1}^{M_r}\sqrt{1-t_p^2}\sum\limits_{i=0}^{m-1}\left(\frac{m\gamma_{\rm th}}{\Omega}\right)^{m_i}\frac{b_p^{\alpha_rm_i+2}}{m_i!} \exp{\left(-\frac{m\gamma_{\rm th}b_p^{\alpha_r}\rho N_{\rm R}}{\Omega P_{\rm R}}+\tilde{\mathcal{A}}(\gamma_{\rm th},b_p)\right)}\notag\\ 
     &\times\sum\limits_{u=0}^{m_i}\binom{m_i}{u}\left(\frac{\rho N_{\rm R}}{P_{\rm R}}\right)^{m_i-u}(-1)^u\Big[1+\mathbbm{1}\{u>0\} \sum\limits_{n=1}^{u} \mathcal{B}\left(u,n,\tilde{\mathcal{A}}(\gamma_{\rm th},b_p)\right) \Big].
\end{align}
\end{small}

\end{theorem}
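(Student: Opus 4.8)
\emph{Proof sketch.} The plan is to replay, step by step, the derivation of the interference-dominated case that produced \eqref{cprf3}, but carrying the extra additive noise term $\tfrac{\rho}{P_{\rm R}}N_{\rm R}$ through the SINR denominator. First I would write, conditioning on $I$ and $d_{kj}$,
\begin{align}
P_{{\rm cov},kj}(\gamma_{\rm th})=\mathbb{E}_{I,d_{kj}}\Big[{\bf{Pr}}\big\{g_{kj}\geq \gamma_{\rm th}d_{kj}^{\alpha_r}\big(I+\tfrac{\rho}{P_{\rm R}}N_{\rm R}\big)\,\big|\,I,d_{kj}\big\}\Big],\notag
\end{align}
and then insert the finite-sum Nakagami-$m$ CCDF obtained from \eqref{nakagamicdf} (legitimate since $m$ is an integer), evaluated at $x=\gamma_{\rm th}d_{kj}^{\alpha_r}(I+\tfrac{\rho}{P_{\rm R}}N_{\rm R})$. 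This yields a sum over $i=0,\dots,m-1$ whose generic summand contains the factor $\big(I+\tfrac{\rho}{P_{\rm R}}N_{\rm R}\big)^{m_i}\exp\!\big(-\tfrac{m\gamma_{\rm th}d_{kj}^{\alpha_r}}{\Omega}(I+\tfrac{\rho}{P_{\rm R}}N_{\rm R})\big)$.

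Second, I would pull the deterministic part of the exponential out as $\exp\!\big(-\tfrac{m\gamma_{\rm th}d_{kj}^{\alpha_r}\rho N_{\rm R}}{\Omega P_{\rm R}}\big)$ and expand $\big(I+\tfrac{\rho}{P_{\rm R}}N_{\rm R}\big)^{m_i}$ by the binomial theorem, producing the inner sum $\sum_{u=0}^{m_i}\binom{m_i}{u}\big(\tfrac{\rho N_{\rm R}}{P_{\rm R}}\big)^{m_i-u}I^{u}$. Each term then reduces to $\mathbb{E}_I[I^{u}e^{-sI}]$ with $s=m\gamma_{\rm th}d_{kj}^{\alpha_r}/\Omega$, and by the differentiation property of the Laplace transform this equals $(-1)^{u}\tfrac{d^{u}}{ds^{u}}\mathbb{E}_I[e^{-sI}]$ — precisely the quantity computed, at differentiation order $m_i$, in Lemmas \ref{lemmaesi} and \ref{lemmai2}. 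Applying those lemmas at order $u$ gives $(-1)^{u}\exp(\mathcal{A}(s,d_{kj}))\big[1+\mathbbm{1}\{u>0\}\sum_{n=1}^{u}\mathcal{B}(u,n,\mathcal{A}(s,d_{kj}))\big]$.

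Third, I would substitute back $s=m\gamma_{\rm th}d_{kj}^{\alpha_r}/\Omega$ so that $\mathcal{A}(s,d_{kj})=\tilde{\mathcal{A}}(\gamma_{\rm th},d_{kj})$, gather the $d_{kj}$-dependence, and take the expectation over $d_{kj}$ using the PDF $f_{d_{kj}}(x)=3x^2/D^3$ on $[0,D]$ from \eqref{pdfdkj}. The surviving integral $\int_0^D x^{\alpha_rm_i+2}(\cdots)\,dx$ is not elementary because of the $\tilde{\mathcal{A}}$ and Bell-polynomial factors, so — exactly as in the interference-dominated case — I would change variable $t=(2x-D)/D$ and approximate by Chebyshev–Gauss quadrature with nodes $t_p=\cos\!\big(\tfrac{2p-1}{2M_r}\pi\big)$ and $b_p=\tfrac{D}{2}(t_p+1)$, which assembles into \eqref{cprf4}.

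The principal obstacle is bookkeeping rather than a new idea: one must keep the differentiation order inside each binomial term at $u$ (so the Bell-polynomial truncation is at $u$, not $m_i$, and the $(-1)^{u}$ stays inside the $u$-sum — this is why the outer sign in \eqref{cprf4} is $(\tfrac{m\gamma_{\rm th}}{\Omega})^{m_i}$ rather than the $(-\tfrac{m\gamma_{\rm th}}{\Omega})^{m_i}$ of \eqref{cprf3}), and one must extract the deterministic noise exponential before invoking the Laplace-transform identity. A minor point deserving a one-line remark is the interchange of $\mathbb{E}_I$ with repeated $s$-differentiation, which is justified by the analyticity of $\mathcal{L}_I(s)$ for $s>0$ established in Lemma \ref{lemmaesi}; setting $N_{\rm R}\to 0$ recovers \eqref{cprf3}, which serves as a consistency check.
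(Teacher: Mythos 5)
Your proposal is correct and follows essentially the same route as the paper: replace $I$ by $I+\tfrac{\rho}{P_{\rm R}}N_{\rm R}$ in $I_2$, pull out the deterministic noise exponential, binomially expand, apply the Laplace-transform differentiation property and Lemmas \ref{lemmaesi}--\ref{lemmai2} at order $u$, and finish with the $d_{kj}$-average via Chebyshev--Gauss quadrature. Your explicit remark on why the sign stays as $(-1)^u$ inside the $u$-sum (rather than being absorbed as in \eqref{cprf3}) and the $N_{\rm R}\to 0$ consistency check are correct and in fact spell out bookkeeping the paper leaves implicit.
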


\begin{proof}
To include noise, $I$ should be replaced by $I+\frac{\rho}{P}N_{\rm R}$ in \eqref{I2}. According to \cite[Eq. 1.111]{gradshteyn2014table}, we can get the $I_2$ in this case as 
\begin{align}\label{newI3}
    I_2&=\mathbb{E}_{I}\left[\left(I+\frac{\rho}{P_{\rm R}}N_{\rm R}\right)^{m_i}\exp{\left(-sI-s\frac{\rho N_{\rm R}}{P_{\rm R}}\right)}\right]\notag\\
    &=\exp{\left(-s\frac{\rho N_{\rm R}}{P_{\rm R}}\right)}\sum_{u=0}^{m_i}\binom{m_i}{u}\left(\frac{\rho}{P_{\rm R}}N_{\rm R}\right)^{m_i-u} \mathbb{E}_{I}\left[I^u\exp{\left(-sI\right)}\right].
\end{align}

Combining \eqref{pdfdkj}, \eqref{pcovkj}, \eqref{I2}, \eqref{I21}, \eqref{iesi}, and \eqref{newI3}, and then employing Chebyshev-Gauss quadrature in the first case, the CP of the RF link in this case can be obtained as \eqref{cprf4}.
\end{proof}


\section{The e2e Outage Performance}
In the considered system, we assume that the DF relay scheme is implemented at all CHs. Then, the equivalent e2e SNR from S to UAV can be given as $\gamma^{\rm DF}_{eq}=\min\{\gamma_k,\gamma_{kj}\}$.

\begin{corollary}
The e2e OP for S-CH-UAV links can be finally achieved as 
\begin{align}
    P_{{\rm out},e2e}(\gamma_{\rm th})=1- P_{{\rm cov},{\rm{SCH}}_k}(\gamma_{\rm th})P_{{\rm cov},kj}(\gamma_{\rm th}),
\end{align}
where $P_{{\rm cov},{\rm{SCH}}_k}(\gamma_{\rm th})$ is presented as \eqref{pcop} and $P_{{\rm cov},kj}(\gamma_{\rm th})$ is given as \eqref{cpkj5}, \eqref{cprf3}, and \eqref{cprf4} in three cases. 
\end{corollary}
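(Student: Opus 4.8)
The plan is to recognize that the final statement—the end-to-end outage probability corollary—is essentially a direct consequence of the DF relaying model combined with the independence of the two hops, so the proof is short and structural rather than computational. First I would recall that under decode-and-forward relaying the equivalent e2e SNR is $\gamma^{\rm DF}_{eq}=\min\{\gamma_k,\gamma_{kj}\}$, as already stated just before the corollary. An outage at the end-to-end level occurs precisely when $\gamma^{\rm DF}_{eq}<\gamma_{\rm th}$, i.e. when $\min\{\gamma_k,\gamma_{kj}\}<\gamma_{\rm th}$, which is the complement of the event $\{\gamma_k>\gamma_{\rm th}\}\cap\{\gamma_{kj}>\gamma_{\rm th}\}$.

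The key step is then to write
\begin{align}
P_{{\rm out},e2e}(\gamma_{\rm th})&={\bf{Pr}}\{\min\{\gamma_k,\gamma_{kj}\}\leq\gamma_{\rm th}\}
=1-{\bf{Pr}}\{\gamma_k>\gamma_{\rm th},\,\gamma_{kj}>\gamma_{\rm th}\}\notag\\
&=1-{\bf{Pr}}\{\gamma_k>\gamma_{\rm th}\}\,{\bf{Pr}}\{\gamma_{kj}>\gamma_{\rm th}\}
=1-P_{{\rm cov},{\rm{SCH}}_k}(\gamma_{\rm th})\,P_{{\rm cov},kj}(\gamma_{\rm th}),\notag
\end{align}
where the factorization in the third equality uses the fact that the S-CH FSO link SNR $\gamma_k$ and the CH-UAV RF link SNR $\gamma_{kj}$ are governed by statistically independent sources of randomness (independent fading processes $h$ and $g_{kj}$, independent point processes $\Phi_{\rm CH}$ and $\Phi_{k{\rm U}}$, and disjoint distance variables $d_k$ and $d_{kj}$). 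I would then simply invoke Eq. \eqref{pcop} for $P_{{\rm cov},{\rm{SCH}}_k}(\gamma_{\rm th})$ and Eqs. \eqref{cpkj5}, \eqref{cprf3}, \eqref{cprf4} for $P_{{\rm cov},kj}(\gamma_{\rm th})$ according to whichever of the three CH-UAV cases is in force, to obtain the claimed closed form.

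The main obstacle—if one can call it that—is the justification of the independence used in the factorization step: one has to argue that conditioning on the serving CH, the FSO hop's channel gain, atmospheric turbulence, misalignment, and the CH's position relative to S are all independent of the RF hop's Nakagami fading, the interfering CHs' positions, and the UAV's position within the serving sphere. This holds by the modeling assumptions in Section \ref{smodel} (the marks, the HPPP generating $\Phi_{\rm P}$, and the per-cluster HPPP $\Phi_{k{\rm U}}$ are all drawn independently), so the argument is just a matter of spelling out that these are separate, independently specified stochastic ingredients. Everything else is bookkeeping: substituting the already-derived expressions. I do not anticipate any genuine analytical difficulty here, since all the heavy lifting was done in Theorems for the individual hops.
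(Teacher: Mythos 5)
Your proposal is correct and follows essentially the same route as the paper: the paper simply cites the standard DF-relay result $F_{\gamma_{eq}^{\rm DF}}(x)=1-[1-F_{\gamma_k}(x)][1-F_{\gamma_{kj}}(x)]$ and evaluates it at $\gamma_{\rm th}$, whereas you derive that formula directly from $\gamma^{\rm DF}_{eq}=\min\{\gamma_k,\gamma_{kj}\}$ and the independence of the two hops. Spelling out the independence of the hop statistics is a reasonable (and slightly more self-contained) way to justify the same factorization the paper takes from the cited reference.
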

\begin{proof}
     The CDF of the equivalent SNR under DF scheme, $\gamma^{\rm DF}_{eq}$, is given as \cite{ai2019physical}
    \begin{align}
        F_{\gamma_{eq}^{\rm DF}}(x)=1-\left[1-F_{\gamma_k}(x)\right]\left[1-F_{\gamma_{kj}}(x)\right].
    \end{align}
    
    The OP here is defined as the probability that $\gamma^{\rm DF}_{eq}$ falls below a given threshold $\gamma_{\rm th}$, which can be written as
    \begin{align}
        P_{{\rm out},e2e}(\gamma_{\rm th})&=F_{\gamma_{eq}^{\rm DF}}(\gamma_{\rm th})=1-\left[1-F_{\gamma_k}(\gamma_{\rm th})\right]\left[1-F_{\gamma_{kj}}(\gamma_{\rm th})\right]=1- P_{{\rm cov},{\rm{SCH}}_k}(\gamma_{\rm th})P_{{\rm cov},kj}(\gamma_{\rm th}).
    \label{op}
    \end{align}
\end{proof}

\section{Numerical Results}
In this section, numerical results will be provided to study the coverage and outage performance of the considered satellite-UAV systems, as well as to verify the proposed analytical models. In the simulation, we run $1\times10^6$ trials of Monte-Carlo simulations to model the randomness of the positions of the considered CHs and UAVs. Unless otherwise explicitly specified, the main parameters adopted in this section are set in Table \ref{tb1}.

\begin{table}[ht]
\caption{Values of main parameters}
\label{tb1}
\centering
\setlength{\abovecaptionskip}{0pt}
\setlength{\belowcaptionskip}{10pt}
\begin{tabular}{c|c|c|c|c|c|c|c|c}
\hline\hline
Parameter & Value   & Unit & Parameter & Value    & Unit & Parameter & Value    & Unit \\ \hline
$H_{\rm U}$  & 50& km   &$H_{\rm S}$  & 35761   & km   &$R$   & 6376    & km\\ \hline
$\xi_0$ & $\pi/800$ & rad  & $G_{\rm S}$ and $G_{\rm R}$  & 107.85   & dB  & $\lambda$    & 1550     & nm    \\ \hline
 $N_{\rm F}$ and $N_{\rm R}$  & $10^{-10}$     & mW  & $\eta$   & 0.5 &    & $A_0$  & 0.5&\\ \hline
$\lambda_{\rm P}$   & 0.001   &  & $\omega$ & 1.1& & $P_{\rm S}$  & 40& dBm  \\ \hline
$\alpha$ &4 & &$\beta$ & 1.9& & 
$h_l$  & $-0.35$ &dB \\ \hline
$D$  & 1 &km &$D_{\max}$ & 20& km & $D_{\min}$  & $2$ &km\\ \hline
 $\Omega$ & 1& &$m$  & $5$ & &$P_{\rm R}$ & 30 & dBm \\ \hline
 $\alpha_r$ & 2& &$\rho$  &7018  & & &  &  \\ \hline\hline
\end{tabular}
\vspace{-15mm}
\end{table}

\subsection{Performance over S-CH FSO Links}\label{perffso}
In this subsection, we will study the coverage performance over S-CH links. 

\begin{figure}[!htb]
\centering
\setlength{\abovecaptionskip}{0pt}
\setlength{\belowcaptionskip}{10pt}
\subfigure[$P_{{\rm cov},{\rm{SCH}}_k}$ versus $\gamma_{\rm th}$ with $P_{\rm S}=40$ dBm]{
\includegraphics[width=3 in]{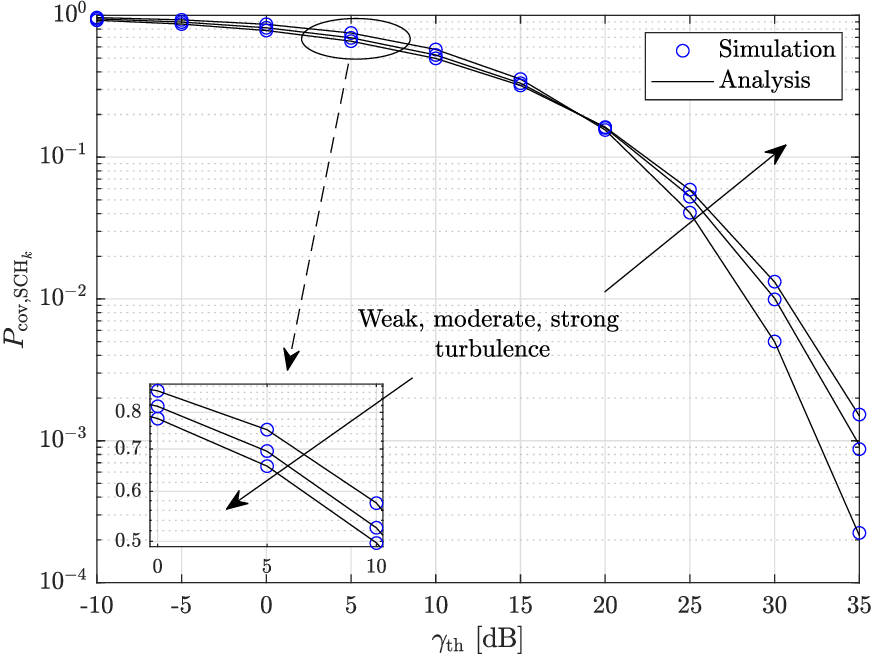}
\label{cpfsoturbu}
}%
\hfill
\subfigure[$1-P_{{\rm cov},{\rm{SCH}}_k}$ versus $P_{\rm S}$ with $\gamma_{\rm th}=30$ dB]{\includegraphics[width=3 in]{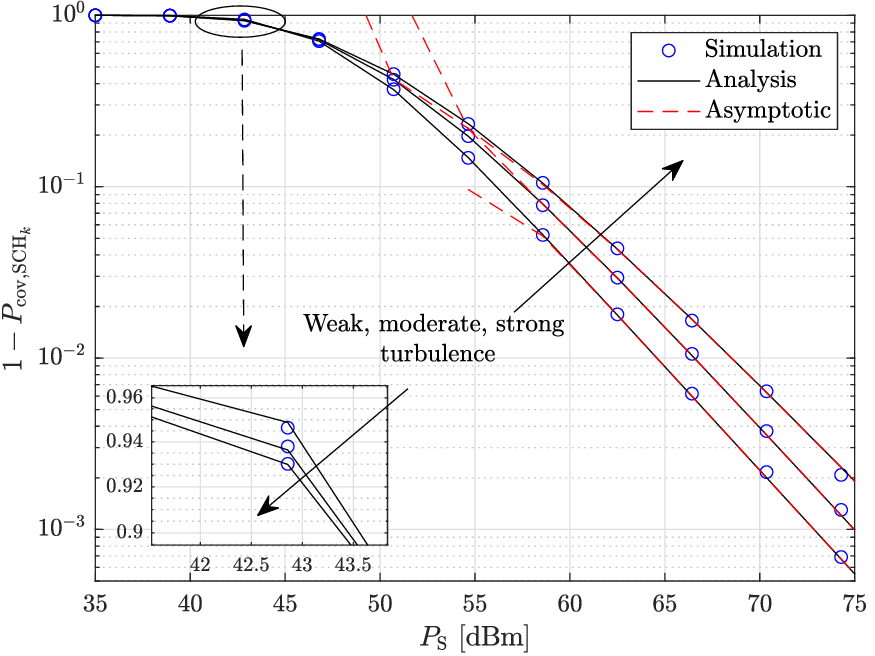}
\label{opfsoturbu1}
}

\subfigure[$1-P_{{\rm cov},{\rm{SCH}}_k}$ versus $P_{\rm S}$ with $\gamma_{\rm th}=5$ dB]{
\includegraphics[width=3 in]{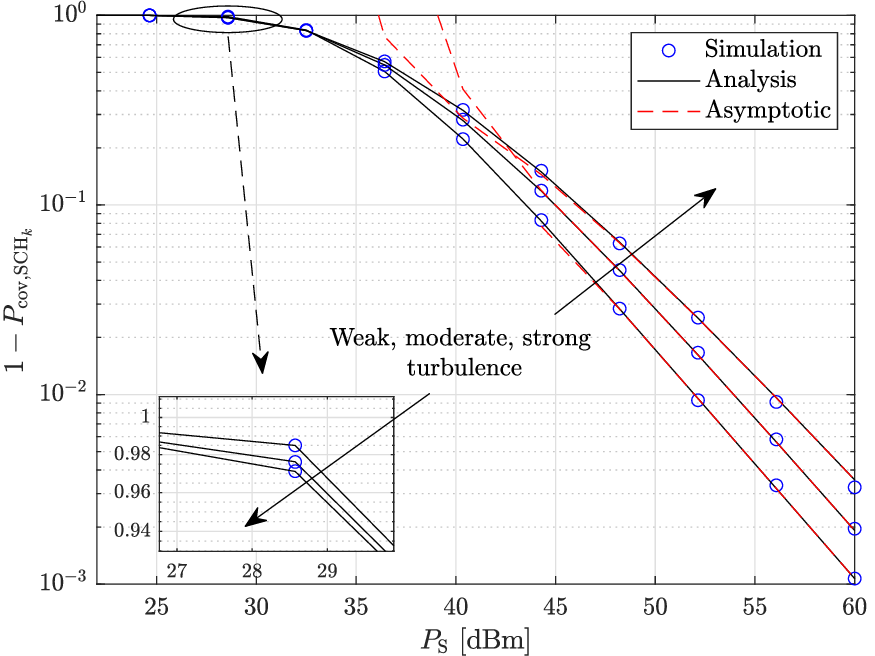}
\label{opfsoturbu2}
}
\centering
\caption{$P_{{\rm cov},{\rm{SCH}}_k}$ and $1-P_{{\rm cov},{\rm{SCH}}_k}$ over S-CH links for various kinds of turbulence (weak turbulence: $\alpha=4.76$, $\beta=3.03$; moderate turbulence: $\alpha=4$, $\beta=1.9$; strong turbulence: $\alpha=4.2$, $\beta=1.4$).}
\label{diffturbu}
\vspace{-10mm}
\end{figure}

In Figs. \ref{diffturbu}, \ref{diffhs}, and \ref{diffpointerror}, the CP is presented for different turbulence, $H_{\rm S}$, and pointing errors, respectively. One can easily see that CP decreases with the increment of $\gamma_{\rm th}$ and coverage performance can be improved while $P_{\rm S}$ increases. The first observation is caused by the fact that a large $\gamma_{\rm th}$ represents a small probability of coverage events. The second observation can be explained by the fact that a large $P_{\rm S}$ generates a large average power of received signals which leads to a large received SNR.

Fig. \ref{cpfsoturbu} shows that the CP with the weak turbulence outperforms the one with the strong turbulence when $\gamma_{\rm th}$ is less than 18 dB. On the contrary, the opposite observation can be found when $\gamma_{\rm th}$ is greater than 18 dB. When $\gamma_{\rm th}$ is large, Fig. \ref{opfsoturbu1} depicts that the weak turbulence leads to a large $1-P_{{\rm cov},{\rm{SCH}}_k}$ (small $P_{{\rm cov},{\rm{SCH}}_k}$) in small $P_{\rm S}$ region, while inverse observation can be obtained in large $P_{\rm S}$ region. When $\gamma_{\rm th}$ is small, $1-P_{{\rm cov},{\rm{SCH}}_k}$ in Fig. \ref{opfsoturbu2} presents the same conclusion compared with that in Fig. \ref{opfsoturbu1}. 

\begin{figure}[!htb]
\centering
\setlength{\abovecaptionskip}{0pt}
\setlength{\belowcaptionskip}{10pt}
\subfigure[$P_{{\rm cov},{\rm{SCH}}_k}$ versus $\gamma_{\rm th}$]{
\includegraphics[width=3 in]{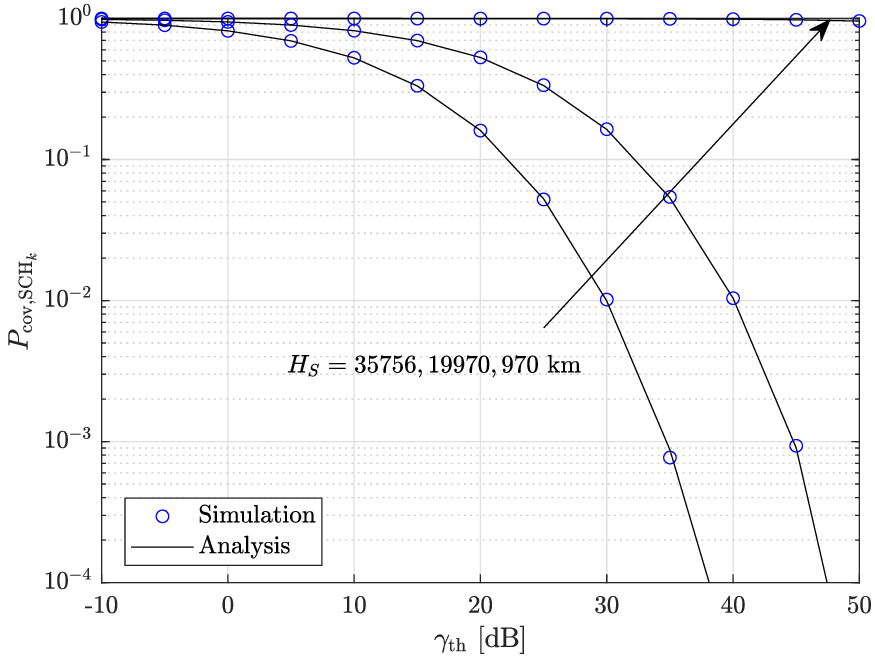}
\label{cpfsohs}
}%
\hfill
\subfigure[$1-P_{{\rm cov},{\rm{SCH}}_k}$ versus $P_{\rm S}$ with $\gamma_{\rm th}=30$ dB]{
\includegraphics[width=3 in]{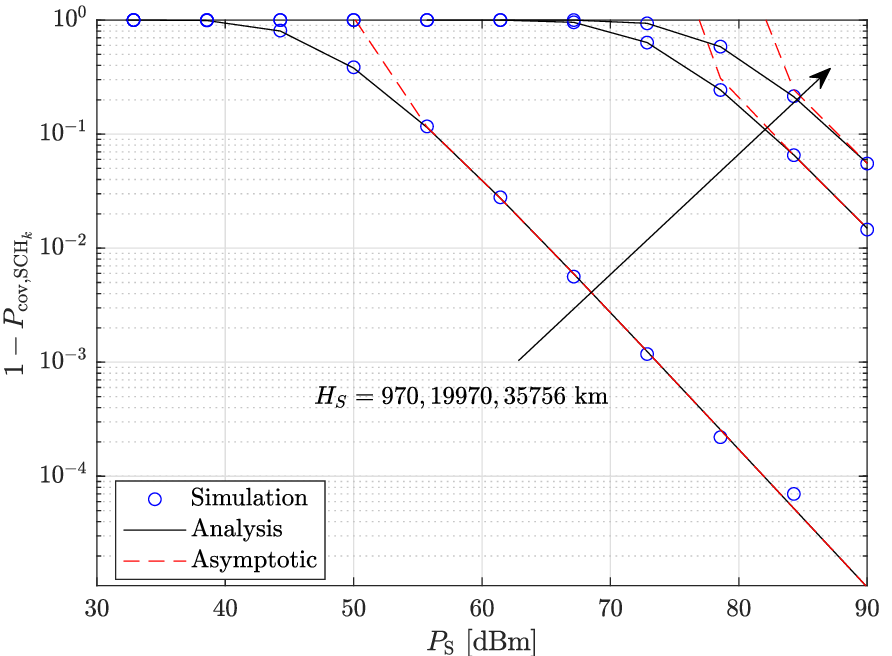}
\label{opfsohs}
}
\centering
\caption{$P_{{\rm cov},{\rm{SCH}}_k}$ and $1-P_{{\rm cov},{\rm{SCH}}_k}$ over S-CH links for various $H_{\rm S}$.}
\label{diffhs}
\vspace{-10mm}
\end{figure}

In Figs. \ref{cpfsohs} and \ref{opfsohs}, we can observe that $H_{\rm S}$ shows a negative effect on the coverage performance. In other words, the CP degrades with an enlarging the orbit height of the satellite which denotes increasing path-loss. Because a large path-loss causes a small received SNR at the CH which results in small CP.

\begin{figure}[!htb]
\centering
\setlength{\abovecaptionskip}{0pt}
\setlength{\belowcaptionskip}{10pt}
\subfigure[$P_{{\rm cov},{\rm{SCH}}_k}$ versus $\gamma_{\rm th}$]{
\includegraphics[width=3 in]{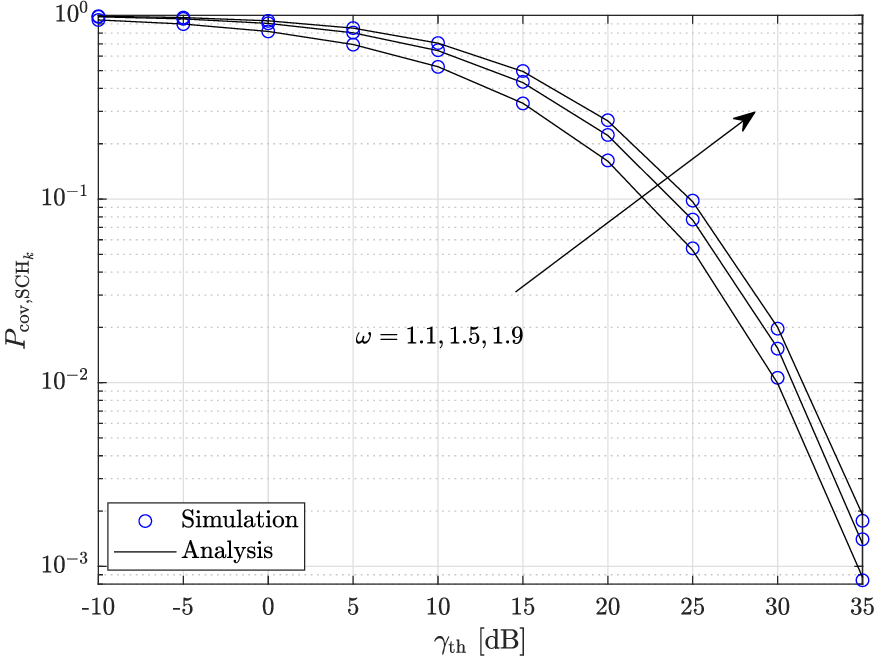}
\label{cpfsopointerror}
}%
\hfill
\subfigure[$1-P_{{\rm cov},{\rm{SCH}}_k}$ versus $P_{\rm S}$ with $\gamma_{\rm th}=20$ dB]{
\includegraphics[width=3 in]{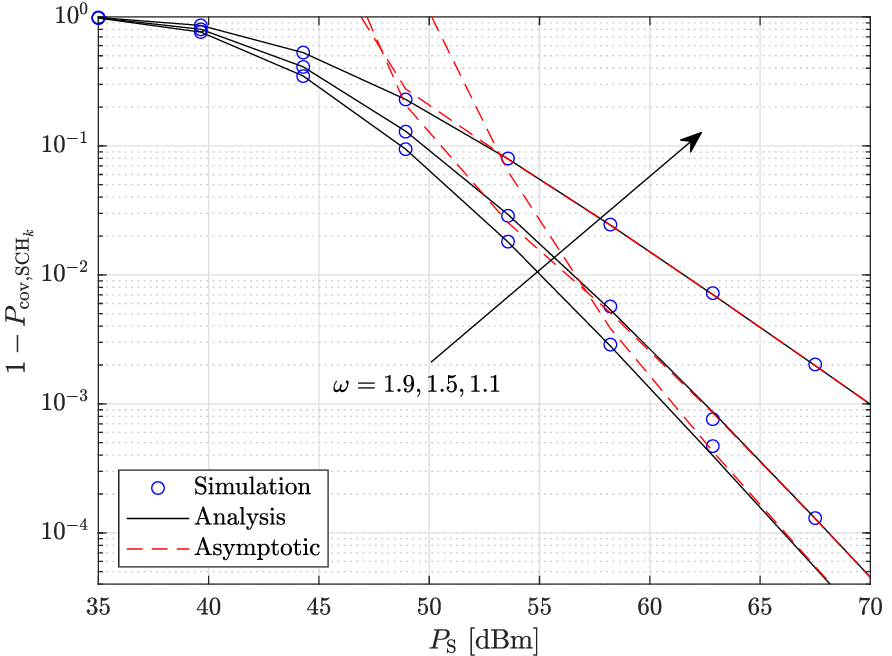}
\label{opfsopointerror}
}
\centering
\caption{$P_{{\rm cov},{\rm{SCH}}_k}$ and $1-P_{{\rm cov},{\rm{SCH}}_k}$ over S-CH FSO links for various pointing errors.}
\label{diffpointerror}
\vspace{-10mm}
\end{figure}

Fig. \ref{diffpointerror} shows that increasing $w$ leads to improved coverage performance. This observation can be explained by the fact that a large $w$ denotes a small pointing error displacement standard deviation with a fixed equivalent beam radius at the receiver, resulting in a large average received power and SNR.

In Figs. \ref{opfsoturbu1}, \ref{opfsoturbu2}, and \ref{opfsohs}, the asymptotic curves in each figure have the same slope in high $P_{\rm S}$ region as they show the same diversity order $\min\{\omega^2,\alpha,\beta\}=1.21$. However, the asymptotic curves with $\omega=1.1$ exhibits a different slope from the others in fig. \ref{opfsopointerror}. Because, $\omega=1.1$ results in $\min\{\omega^2,\alpha,\beta\}=1.21$ and $\omega=1.5$ or 1.9 leads to $\min\{\omega^2,\alpha,\beta\}=1.9$, which cause different diversity orders.

Furthermore, one can clearly see from Figs. \ref{diffturbu}-\ref{diffpointerror} that simulation results agree with the analysis ones very well and asymptotic curves converge to the simulation and analysis ones in high $P_{\rm S}$ region, which verifies the accuracy of our proposed analytical models and the correctness of the derived diversity order.

\subsection{Performance over CH-UAV RF Links}
In this subsection, coverage performance under three cases (interference-free case, interference-dominated case, and interference-and-noise case) will be investigated for various main parameters.

\subsubsection{Interference-Free Case}
\begin{figure}[!htb]
\centering 
 \setlength{\abovecaptionskip}{0pt}
  \setlength{\belowcaptionskip}{10pt}
\begin{minipage}[b]{0.45\textwidth} 
\centering 
\includegraphics[width=3 in]{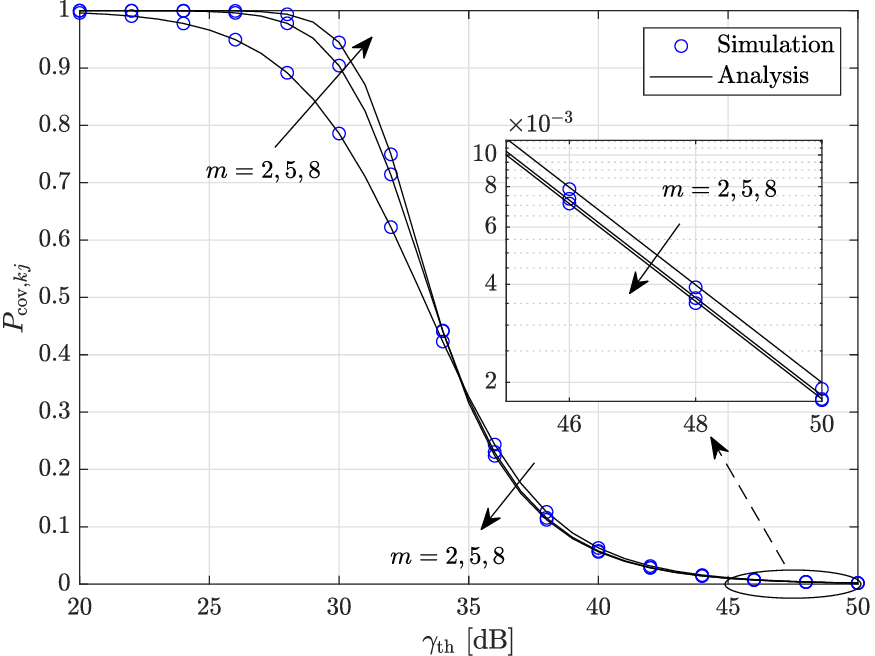}
    \caption{$P_{{\rm cov,}kj}$ versus $\gamma_{\rm th}$ over CH-UAV links for different $m$ in interference-free case}
    \label{snr_diff_m0}
\end{minipage}
\hfill
\begin{minipage}[b]{0.45\textwidth} 
\centering 
\includegraphics[width=3 in]{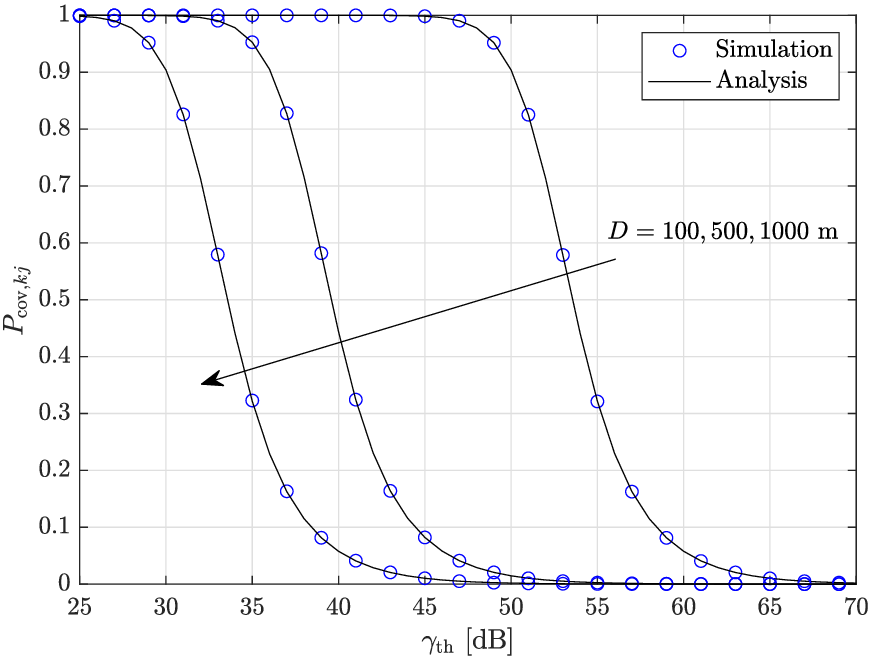}
    \caption{$P_{{\rm cov,}kj}$ versus $\gamma_{\rm th}$ over CH-UAV links for different $D$ in interference-free case}
    \label{snr_diff_D}
\end{minipage}
\vspace{-10mm}
\end{figure}



\begin{figure}[!htb]
\centering
    \setlength{\abovecaptionskip}{0pt}
    \setlength{\belowcaptionskip}{10pt}
    \includegraphics[width=3.1 in]{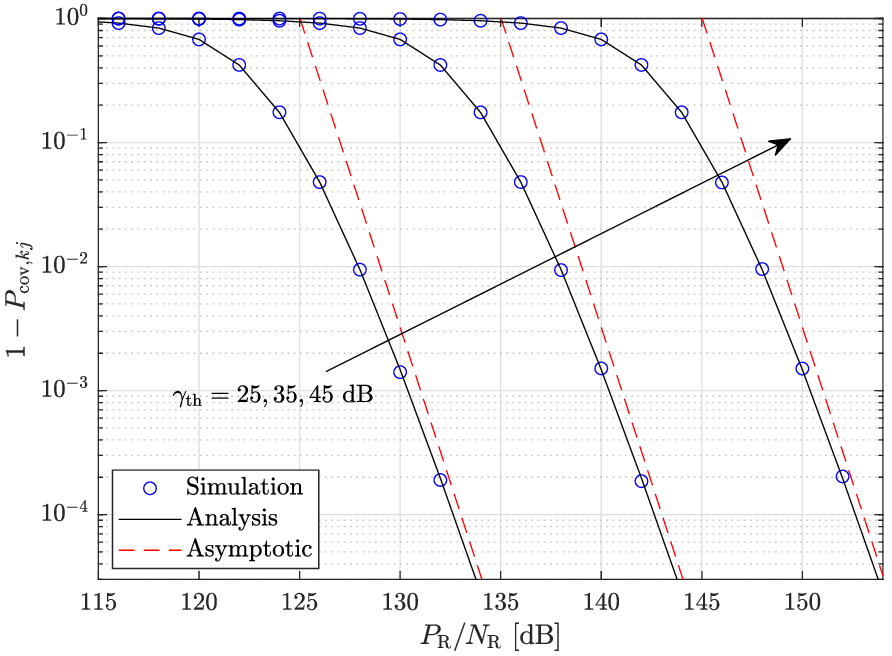}
    \caption{$1-P_{{\rm cov,}kj}$ versus $P_{\rm R}/N_{\rm R}$ over CH-UAV links for different $\gamma_{\rm th}$ in interference-free case }
    \centering
    \label{snr_op_diff_rth}
\end{figure}

Figs. \ref{snr_diff_m0} and \ref{snr_diff_D} presents the coverage performance for various $m$ and $D$ in the case that there is no interference. One can see that the CP with a large $m$ outperforms the one with a small $m$ when $\gamma_{\rm th}$ is less than 34 dB. An opposite observation can be achieved for $\gamma_{\rm th}>34$ dB. Also, it is obvious that a large $D$ leads to a small CP, which denotes a large distributed space for the UAVs, leading to large path-loss.  

Figs. \ref{snr_op_diff_rth} shows the $1-P_{{\rm cov},kj}$ versus $P_{\rm R}/N_{\rm R}$ for various $\gamma_{\rm th}$. It is easy to find out that $1-P_{{\rm cov},kj}$ decreases or $P_{{\rm cov},kj}$ increases when $P_{\rm R}/N_{\rm R}$ increases or $\gamma_{\rm th}$ decreases. These findings can be explained by the reason proposed in the second paragraph in Section \ref{perffso}.

\subsubsection{Interference-Dominated Case}
\begin{figure}[!htb]
\centering 
\setlength{\abovecaptionskip}{0pt}
\setlength{\belowcaptionskip}{10pt}
\begin{minipage}[b]{0.45\textwidth} 
\centering 
\includegraphics[width=3 in]{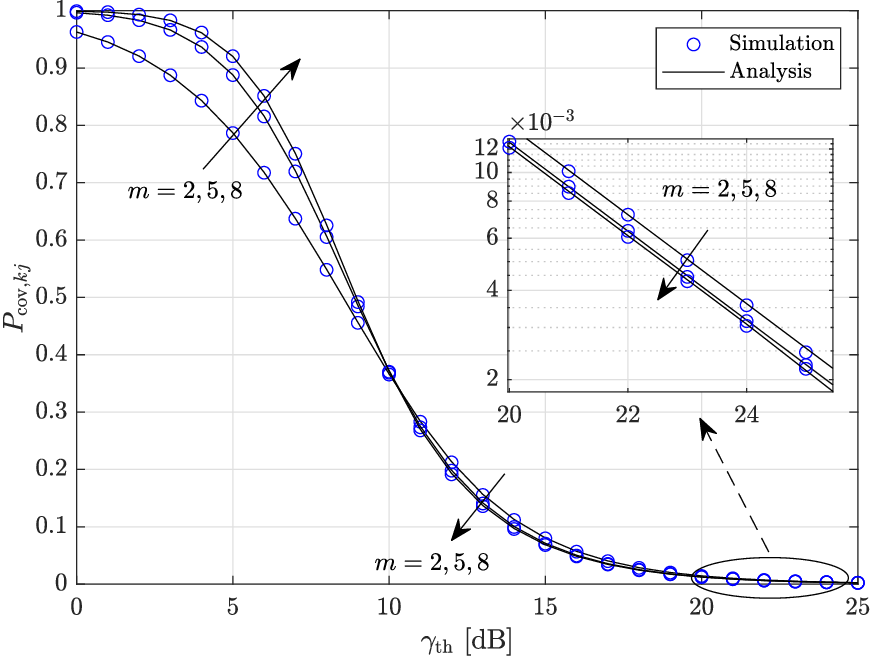}
\caption{$P_{{\rm cov,}kj}$ versus $\gamma_{\rm th}$ over CH-UAV links for different $m$ in interference-dominated case}
\label{sir_diff_m0}
\end{minipage}
\hfill
\begin{minipage}[b]{0.45\textwidth} 
\centering 
\includegraphics[width=3 in]{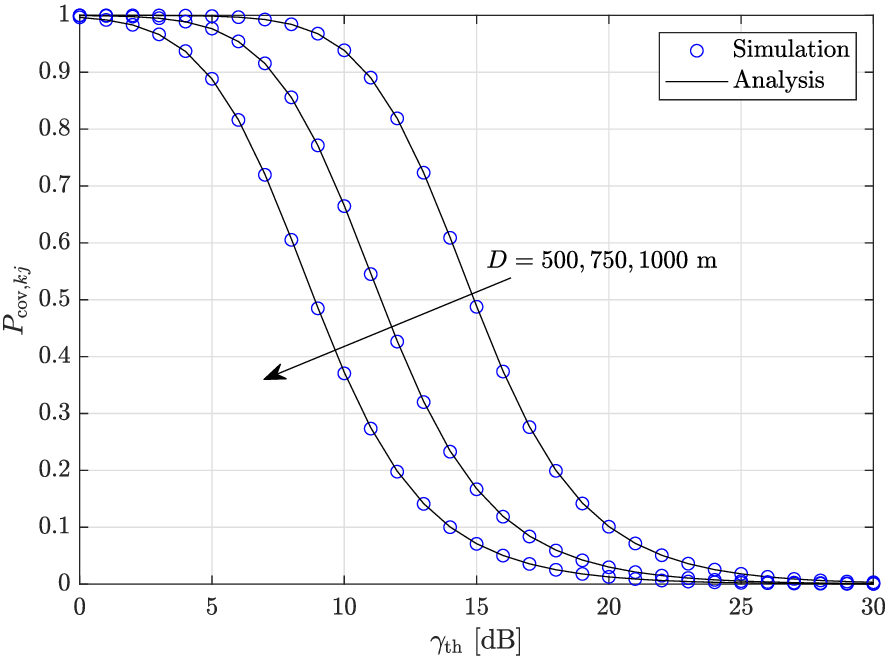}
\caption{$P_{{\rm cov,}kj}$ versus $\gamma_{\rm th}$ over CH-UAV links for different $D$ in interference-dominated case}
\label{sir_diff_D}
\end{minipage}
\vspace{-10mm}
\end{figure}

\begin{figure*}[htbp]
\centering 
\setlength{\abovecaptionskip}{0pt}
\setlength{\belowcaptionskip}{10pt}
\begin{minipage}[b]{0.45\textwidth} 
\centering 
\setlength{\abovecaptionskip}{0pt}
\setlength{\belowcaptionskip}{10pt}
\includegraphics[width=3 in]{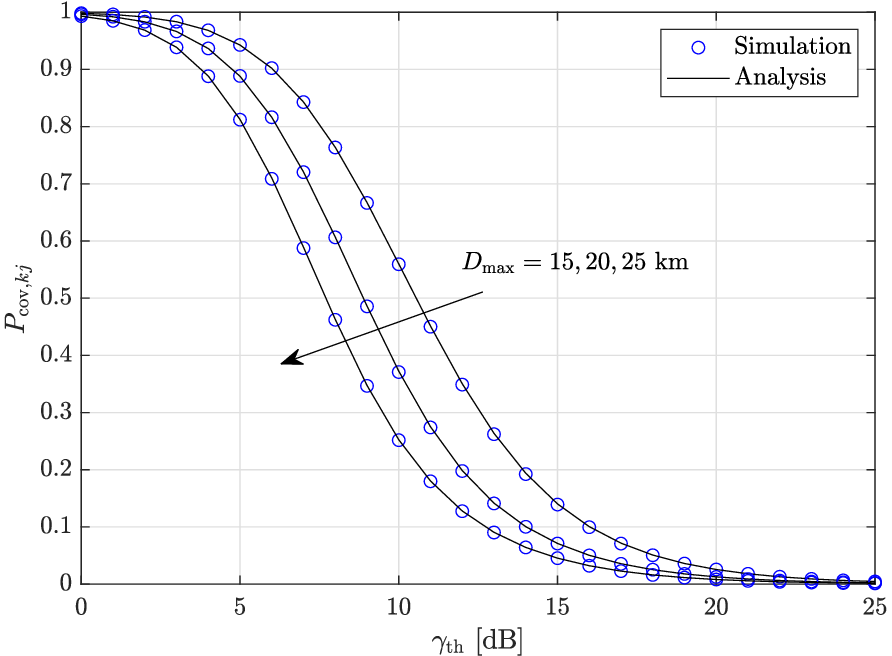}
\caption{$P_{{\rm cov,}kj}$ versus $\gamma_{\rm th}$ over CH-UAV links for different $D_{\max}$ in interference-dominated case}
\label{sir_diff_Dmax}
\end{minipage}
\hfill
\begin{minipage}[b]{0.45\textwidth} 
\centering 
\setlength{\abovecaptionskip}{0pt}
\setlength{\belowcaptionskip}{10pt}
\includegraphics[width=3 in]{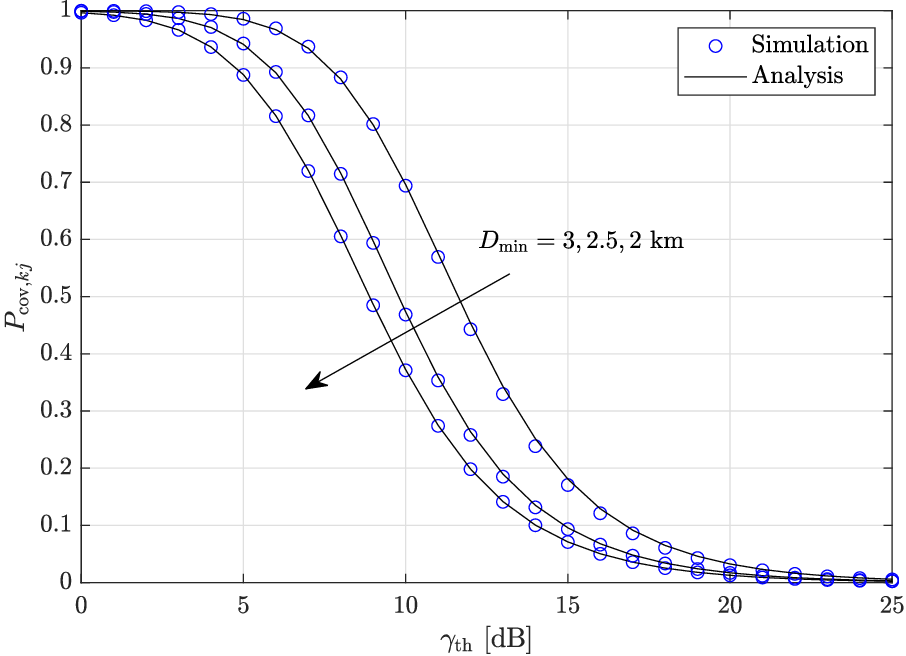}
\caption{$P_{{\rm cov,}kj}$ versus $\gamma_{\rm th}$ over CH-UAV links for different $D_{\min}$ in interference-dominated case}
\label{sir_diff_Dmin}
\end{minipage}
\vspace{-10mm}
\end{figure*}




Figs. \ref{sir_diff_m0} and \ref{sir_diff_D} show the coverage performance over CH-UAV links with various $m$ and $D$ when the interference is dominant. Similar conclusions can be reached here with these achieved in Figs. \ref{snr_diff_m0} and \ref{snr_diff_D}.   

As presented in Figs. \ref{sir_diff_Dmax} and \ref{sir_diff_Dmin}, the influences of $D_{\max}$ and $D_{\min}$ are investigated, respectively. Clearly, $D_{\max}$ shows a negative effect on the CP while $D_{\min}$ exhibits an opposite effect in this case. Because a large $D_{\max}$ or a small $D_{\min}$ leads to more interfering CHs in the considered space, resulting in the degraded coverage performance.

\subsubsection{Interference-and-Noise Case}
\begin{figure}[!htb]
\centering 
\setlength{\abovecaptionskip}{0pt}
\setlength{\belowcaptionskip}{10pt}
\begin{minipage}[b]{0.45\textwidth} 
\centering 
\setlength{\abovecaptionskip}{0pt}
\setlength{\belowcaptionskip}{10pt}
\includegraphics[width=3 in]{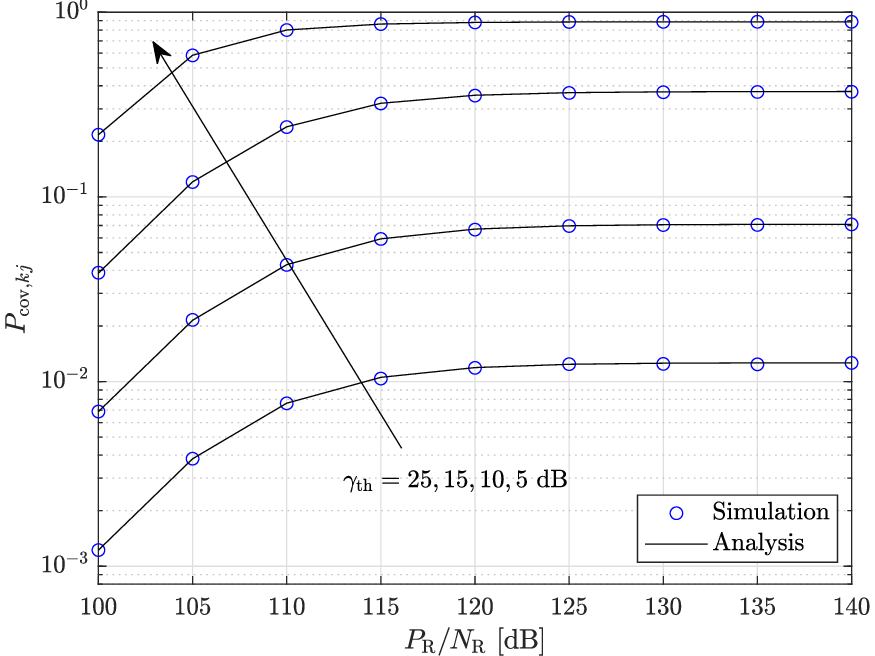}
\caption{$P_{{\rm cov,}kj}$ versus $P_{\rm R}/N_{\rm R}$ over CH-UAV links for various $\gamma_{\rm th}$ in interference-and-noise case }
\label{sinr_diff_th}
\end{minipage}
\hfill
\begin{minipage}[b]{0.45\textwidth} 
\centering 
\setlength{\abovecaptionskip}{0pt}
\setlength{\belowcaptionskip}{10pt}
 \includegraphics[width=3.1 in]{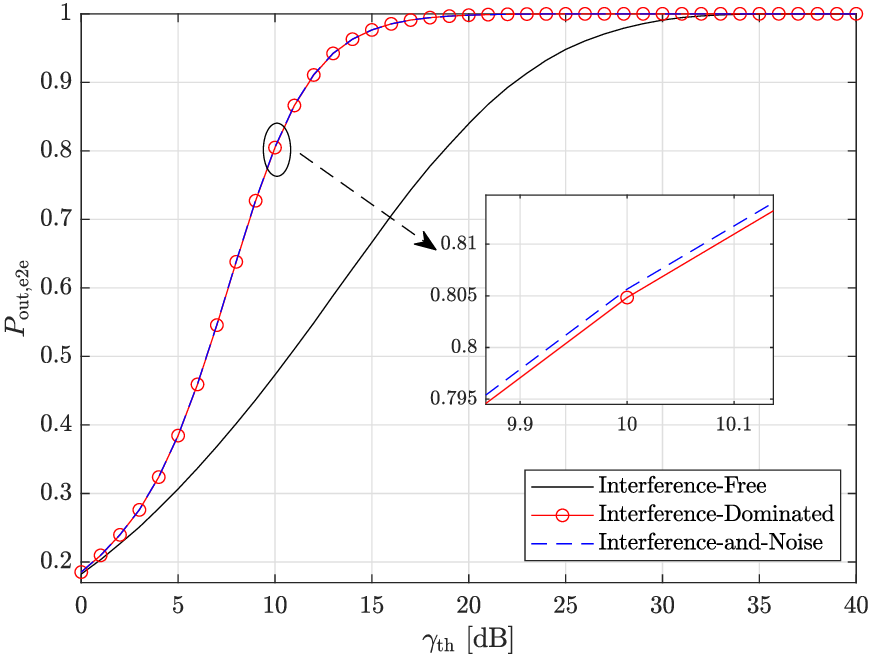}
\caption{$P_{{\rm out,e2e}}$ versus $\gamma_{\rm th}$ over S-CH-UAV links under three cases considered in CH-UAV links}
\label{3cases}
\end{minipage}
\vspace{-10mm}
\end{figure}

Observing the principles about these four parameters achieved in the previous subsection, similar conclusions can be reached in this case, namely, the ones addressing the effects of $m$, $D$, $D_{\max}$, and $D_{\min}$ on CP. Therefore, to more completely understand how the system parameters affect the coverage performance of the target system, the transmit SNR $P_{\rm R}/N_{\rm R}$ of the typical CH will be studied in this subsection, instead of considering the cases of different $m$, $D$, $D_{\max}$, and $D_{\min}$.

Fig. \ref{sinr_diff_th} depicts the CP versus  $P_{\rm R}/N_{\rm R}$ for different $\gamma_{\rm th}$. It is obvious that the CP upgrades with the increment of $P_{\rm R}/N_{\rm R}$ or the decrement of $\gamma_{\rm th}$. The findings can also be explained by the reason proposed in the second paragraph in Section \ref{perffso}.

Furthermore, it can be obviously seen from Figs. \ref{snr_diff_m0}-\ref{sinr_diff_th} that simulation and analysis curves match with each other very well in various cases, which verifies the correctness of the analytical models proposed in Section \ref{rflink} and the approximation shown in Fig. \ref{infspace}.

\subsection{The e2e Outage Performance}
In Fig. \ref{3cases}, the e2e outage performance of the considered system under three cases is investigated through simulation results. One can easily see that the e2e OP in interference-free case outperforms that under the other two cases. We can also observe that the e2e outage performance in interference-dominated case is a little better than that in interference-and-noise case while they are very close to each other. Because the average power of the noise is much smaller than that of the interference, which results in the big gap between the OP in interference-free case and that under the other two cases, as well as the tiny difference between the curves under the latter two cases.

\section{Conclusion}

In this paper, we have studied the coverage and outage performance in a cooperative satellite-UAV communication system with DF relay scheme, while considering the randomness of the positions of CHs and UAVs. Closed-form and approximated expressions for the CP over S-CH FSO links were derived. Moreover, the coverage performance over CH-UAV RF links was analyzed under three cases: interference-free, interference-dominated, and interference-and-noise cases. The analytical expressions for the CP under these three cases were presented, as well as the asymptotic one under the first case. We finally showed the closed-form analytical expression for the e2e OP over S-CH-UAV links. 

Observing from the numerical results, some useful remarks can be reached as follows:

1) The intensity of turbulence exhibits a negative influence on the CP over S-CH FSO link in small $\gamma_{\rm th}$ or large $P_{\rm S}$ region while opposite observation is found in large $\gamma_{\rm th}$ or small $P_{\rm S}$ region.

2) The altitude of the satellite and pointing error show negative influences on the CP over S-CH FSO links. 

3) The fading parameter of Nakagami-$m$ fading, $m$, shows a positive effect on the CP over CH-UAV RF links in small $\gamma_{\rm th}$ region and a negative effect in large $\gamma_{\rm th}$ region.

4) Over CH-UAV RF links, the CH's coverage radius $D$ and sensitivity radius $D_{\max}$ have negative influences on CP while the hard-core radius $D_{\min}$ shows a positive effect. 

5) The diversity orders over S-CH FSO links and CH-UAV links in interference-free case are $\min\{\omega^2,\alpha,\beta\}$ and $m$, respectively.

\section*{Appendix I: Proof of Proposition 1}
\begin{figure}[!htb]
\centering 
\setlength{\abovecaptionskip}{0pt}
\setlength{\belowcaptionskip}{10pt}
\setlength{\abovecaptionskip}{0pt}
\setlength{\belowcaptionskip}{10pt}
\begin{minipage}[b]{0.45\textwidth} 
\centering 
\setlength{\abovecaptionskip}{0pt}
\setlength{\belowcaptionskip}{10pt}
\includegraphics[width=3.1 in]{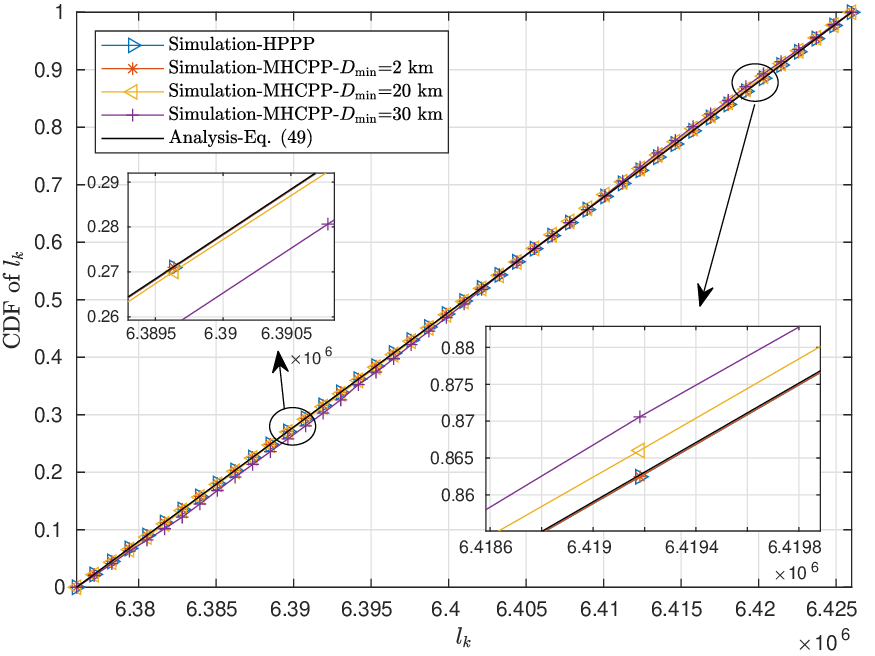}
\caption{CDF of $l_k$}
\label{lk}
\end{minipage}
\hfill
\begin{minipage}[b]{0.45\textwidth} 
\centering 
\setlength{\abovecaptionskip}{0pt}
\setlength{\belowcaptionskip}{10pt}
\includegraphics[width=3.1 in]{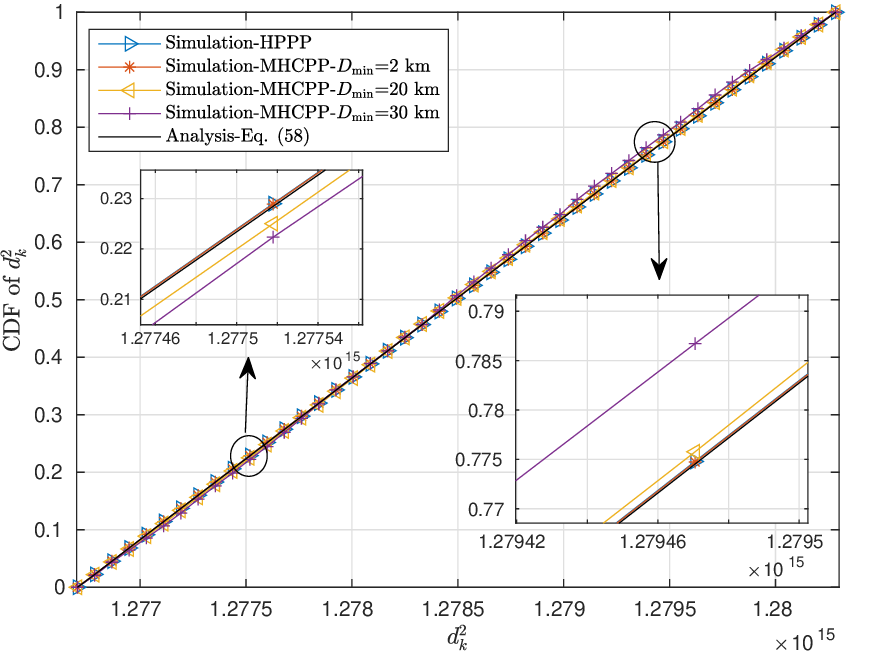}
\caption{CDF of $d^2_k$}
\label{dk}
\end{minipage}
\vspace{-10mm}
\end{figure}



\textcolor{black}{As it is difficult to prove Proposition 1 mathematically, we use Monte-Carlo simulation instead here.} Figs. \ref{lk} and \ref{dk} present simulation and analysis results of the CDFs of $l_k$ and $d^2_k$ with HPPP and MHCPP corresponding to different $D_{\min}$ in \eqref{intensity}. \textcolor{black}{The analysis curves are according to \eqref{cdflkxik} ($y=\xi_0$) and \eqref{cdfdk2} in Appendix II.} The values of other parameters adopted in this simulation are listed in Table \ref{tb1} \textcolor{black}{and $\rm V^{\frac{1}{3}}=46.3$ km.}

\textcolor{black}{
The MHCPP with a specific $D_{\min}$ is thinned from the HPPP by the rule introduced in Section \ref{smodel}. Observing from these two figures, one can see that they present three same rules: 1. the simulation curve of MHCPP gets close to that of HPPP when $D_{\min}$ decreases; 2. the CDF curve of MHCPP with $D_{\min}=2$ km is identical to that that of HPPP; 3. the simulation curve matches the analysis one very well.}

\textcolor{black}{
These three observations verify the correctness of Proposition 1, \eqref{cdflkxik} ($y=\xi_0$) and \eqref{cdfdk2}. }

\section*{Appendix II: Proof of Lemma \ref{lemmapdfdk2}}
The joint CDF of $l_k$ and $\xi_k$ is 
 \begin{align}\label{cdflkxik}
     F_{l_k,\xi_k}(x,y)&=\frac{\int\limits_0^{2\pi}d\theta\int\limits_0^{y} \sin{\xi}d\xi\int\limits_R^{x} r^2dr}{V}=\frac{2\pi (1-\cos{y})(x^3-R^3)}{3V}.
 \end{align}
 
 Then, the joint PDF of $l_k$ and $\xi_k$ can be written as 
 \begin{align}
     f_{l_k,\xi_k}(x,y)=\frac{\partial^2 F_{l_k,\xi_k}(x,y)}{\partial x \partial y}=\frac{2\pi x^2\sin{y}}{V}.
 \end{align}

From Fig. \ref{sh}, the relationships between $l_k$, $\xi_k$, and $d_k^2$ can be represented as
 \begin{align}
     d^2_k={l_k^2+L^2-2l_kL\cos{\xi_k}},
 \end{align}
where $L=H_{\rm S}+H_{\rm U}+R$.

It can be easily seen that 
\begin{align}
d_{\min}=H_{\rm S}\leq d_k\leq \sqrt{(R+H_{\rm U})^2+L^2-2(R+H_{\rm U})L\cos{\xi_0}}=d_{\max}.  \end{align}

To obtain the PDF of $d_k^2$, we first derive the joint PDF of $d_k^2$ and $l_k$.

According to the multivariate change of variables formula, the Jacobian determinant of matrix\\ $\partial(d_k^2,l_k)/\partial(l_k,\xi_k)$ is 
\begin{align}
    \left|\frac{\partial(d_k^2,l_k)}{\partial(l_k,\xi_k)}\right|&=\left|\begin{matrix}
    2l_k-2L\cos{\xi_k}&2l_kL\sin{\xi_k}\\
    1&0    \end{matrix}\right|=2l_kL\sin{\xi_k}.
\end{align}

Then, the joint PDF of $d_k^2$ and $l_k$ can be achieved as
\begin{align}\label{fdklk}
    f_{d^2_k,l_k}(x,y)&=\frac{f_{l_k,\xi_k}(x,y)}{\left|\frac{\partial(d_k^2,l_k)}{\partial(l_k,\xi_k)}\right|}=\frac{\pi y}{VL},
\end{align}
where $R\leq y \leq R+H_{\rm U}$ and $\cos{\xi_0} \leq \cos{\xi_k}=\frac{y^2+L^2-x}{2Ly}\leq1$.

The PDF of $d_k^2$ can be acquired through the integration of \eqref{fdklk} according to $l_k$ as follows
\begin{align}
    f_{d_k^2}(x)&=\int\limits_{\tau_1(x)}^{\tau_2(x)} f_{d_k,l_k}(x,y) dy=\frac{\pi}{2VL}\left[\tau_2^2(x)-\tau_1^2(x)\right].
\end{align}

Observing $R\leq y \leq R+H_{\rm U}$ and $\cos{\xi_0} \leq \cos{\xi_k}=\frac{y^2+L^2-x}{2Ly}\leq1$, $\tau_1$ and $\tau_2$ can be obtained as 
\begin{align}
    \tau_1(x)=\max{\left\{R,L-\sqrt{x} \right\}}
\end{align}
and
\begin{align}
    \tau_2(x)=\min{\left\{R+H_{\rm U},L\cos{\xi_0}-\sqrt{x-L^2\sin^2{\xi_0}}\right\}},
\end{align}
respectively.

Furthermore, the CDF of $d_k^2$ can be achieved as
\begin{align}\label{cdfdk2}
    F_{d_k^2}(y)=\int_{d_{\min}^2}^{y}f_{d_k^2}(x)dx,~d_{\min}^2\leq y\leq d_{\max}^2.
\end{align}

\section*{Appendix III: Proof of Lemma \ref{lemmaesi}}
\begin{figure}[!htb]
\centering
    \setlength{\abovecaptionskip}{0pt}
    \setlength{\belowcaptionskip}{10pt}
    \includegraphics[width=3.1 in]{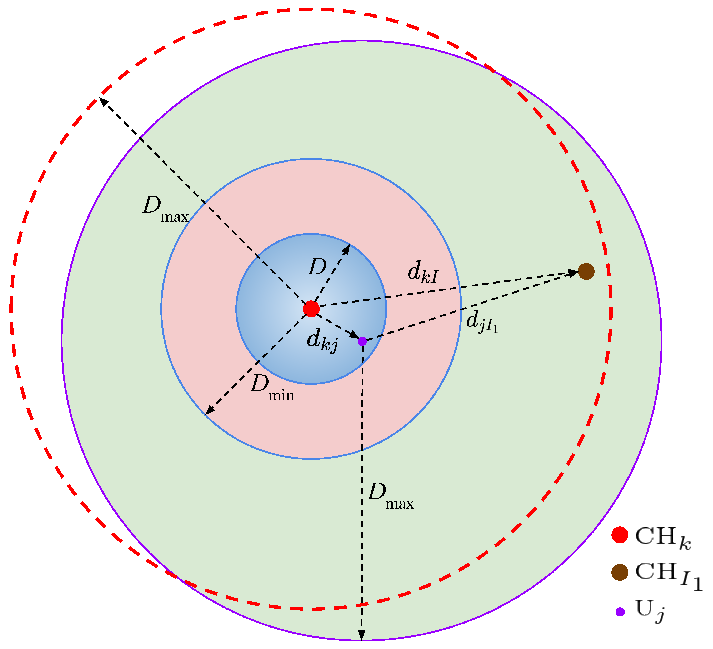}
    \caption{CH-UAV link model}
    \centering
    \label{infspace}
    \vspace{-10mm}
\end{figure}

As shown in Fig. \ref{infspace}, the inner blue sphere is the serving space of CH$_k$. The red and blue spaces are interference-free as it is less than $D_{\min}$ to CH$_k$. These CHs including CH$_{I_1}$, which cause the interference to the typical UAV $U_j$, are located in the green space. $d_{kj}$, $d_{jI_1}$, and $d_{kI_1}$ represents the distances between CH$_k$ and $U_j$, $U_j$ and CH$_{I_1}$, and $U_k$ and CH$_{I_1}$, respectively.

To make the following derivation tractable, we make an approximation that CH$_{I_1}$ is distributed in the sphere with a red dashed outline instead of the green sphere. This approximation is reasonable as these two spheres are mostly overlapped and the shift between them is less than $D$. 

Similar to Lemma \ref{lemmapdfdk2}, we can get the joint PDF of $d_{jI_1}^2$ given $d_{kj}$ as
\begin{align}\label{pdfdji1}
    f_{d_{jI_1}^2|d_{kj}}(x)=\frac{\pi [\tau_4^2(x,d_{kj})-\tau_3^2(x,d_{kj})]}{2d_{kj}V_1},
\end{align}
where $V_1=\frac{4\pi}{3}(D_{\max}^3-D_{\min}^3)$, $\tau_4(x,d_{kj})=\min\{D_{\max},\sqrt{x}+d_{kj}\}$, $\tau_3(x,d_{kj})=\max\{D_{\min},\sqrt{x}-d_{kj}\}$, and $d_{jI_1}^{\min}=(D_{\min}-d_{kj})^2\leq x \leq (D_{\max}+d_{kj})^2=d_{jI_1}^{\max}$.

It is easy to obtain that $V_1$ is the volume of the green space. The number of interfering CHs in this space has the probability ${\bf{Pr}}\{X=N_I\}=\frac{(\lambda_{\rm CH} V_1)^{N_I}}{N_I!}e^{-\lambda_{\rm CH} V_1}$.

For $s>0$, $\mathbb{E}_{I}[e^{-sI}]$ can be calculated as 
\begin{small}
\begin{align}\label{esii6}
    \mathbb{E}_{I}[e^{-sI}]&=\mathbb{E}_{N_I}\left\{\mathbb{E}_{I}\left[\exp{\left(-s\sum_{i=1}^{N_I}g_{jI_i}d_{jI_i}^{-\alpha_r}\right)}\right]\right\}=\mathbb{E}_{N_I}\left\{\prod_{i=1}^{N_I}\mathbb{E}_{g_{jI_i},d_{jI_i}}\left[\exp{\left(-sg_{jI_i}d_{jI_i}^{-\alpha_r}\right)}\right]\right\}\notag\\
    &=\sum_{N_I=0}^{\infty}\frac{(\lambda_{\rm CH} V_1)^{N_I}}{N_I!}\exp{(-\lambda_{\rm CH} V_1)}\left\{\mathbb{E}_{g_{jI_1},d_{jI_1}}\left[\exp{\left(-sg_{jI_1}d_{jI_1}^{-\alpha_r}\right)}\right]\right\}^{N_I}\notag\\
     &=\sum_{N_I=0}^{\infty}\frac{(\lambda_{\rm CH} V_1 I_3)^{N_I}}{N_I!}\exp{(-\lambda_{\rm CH} V_1)}=\exp{(-\lambda_{\rm CH} V_1+\lambda_{\rm CH} V_1I_3))},
\end{align}
\end{small}
where $I_3=\int\limits_{d_{jI_1}^{\min}}^{d_{jI_1}^{\max}}\mathbb{E}_{g_{jI_1}}\left[\exp{\left(-sg_{jI_1}x^{-\frac{\alpha_r}{2}}\right)}\right]f_{d_{jI_1}^{2}|d_{kj}}(x)dx$.

Using the MGF of Nakagami-$m$ function \cite{simon2005digital}, $\mathbb{E}_{g_{jI_1}}\left[\exp{\left(-sg_{jI_1}x^{-\frac{\alpha_r}{2}}\right)}\right]$ can be obtained as
\begin{align}\label{mgfg}
    \mathbb{E}_{g_{jI_1}}\left[\exp{\left(-sg_{jI_1}x^{-\frac{\alpha_r}{2}}\right)}\right]=\left(\frac{\frac{ m}{\Omega s}x^{\frac{\alpha_r}{2}}}{\frac{ m}{\Omega s}x^{\frac{\alpha_r}{2}}+1 }\right)^{m}.
\end{align}

Substituting \eqref{pdfdji1} and \eqref{mgfg} in $I_3$, it deduces
\begin{align}\label{I333}
    I_3&=\left(\frac{m}{\Omega s}\right)^{m}\frac{\pi }{2d_{kj}V_1 }\int\limits_{d_{jI_1}^{\min}}^{d_{jI_1}^{\max}}\frac{x^{\frac{m\alpha_r}{2}}}{\left(\frac{ m}{\Omega s}x^{\frac{\alpha_r}{2}}+1\right)^{m} }\left[\tau_4^2\left(x,d_{kj}\right)-\tau_3^2\left(x,d_{kj}\right)\right]dy.
\end{align}

In the following, we will use $\tau_3$ and $\tau_4$ to respectively represent $\tau_3\left(x,d_{kj}\right)$ and $\tau_4\left(x,d_{kj}\right)$ for convenience and discuss the integral interval in \eqref{I333} in four cases according to the values of $\tau_3$ and $\tau_4$.

Case 1: When $\tau_4=D_{\max}$ and  $\tau_3=D_{\min}$, we can get $(D_{\max}-d_{kj})^2<x<(D_{\min}+d_{kj})^2$. However, as $D_{\max}\gg D_{\min}$, $D_{\max}\gg D>d_{kj}$, and $D_{\max}-D_{\min}\gg 2d_{kj}$, this case does not exist.

Case 2: When $\tau_4=\sqrt{x}+d_{kj}$ and $\tau_3=D_{\min}$, $(D_{\min}-d_{kj})^2<x<(D_{\min}+d_{kj})^2$ and $\tau_4^2-\tau_3^2=x+2d_{kj}\sqrt{x}+d_{kj}^2-D_{\min}^2$ can be obtained.

Case 3: When $\tau_4=\sqrt{x}+d_{kj}$ and $\tau_3=\sqrt{x}-d_{kj}$, it deduces $d_{jI_1}^{\rm{g_1}}=(D_{\min}+d_{kj})^2<x<(D_{\max}-d_{kj})^2=d_{jI_1}^{\rm{g_2}}$ and $\tau_4^2-\tau_3^2=4d_{kj}\sqrt{x}$.

Case 4: When $\tau_4=D_{\max}$ and $\tau_3=\sqrt{x}-d_{kj}$, we can reach $(D_{\max}-d_{kj})^2<x<(D_{\max}+d_{kj})^2$ and $\tau_4^2-\tau_3^2=-x+2d_{kj}\sqrt{x}+D_{\max}^2-d_{kj}^2$.

By using \cite[Eq. 3.194.1]{gradshteyn2014table}, $I_3$ can be represented as
\begin{small}
\begin{align}\label{I6ana}
    I_3&=\left(\frac{m}{\Omega s}\right)^{m}\frac{\pi }{2d_{kj}V_1 }\Bigg\{\int\limits_{d_{jI_1}^{\min}}^{d_{jI_1}^{\rm{g_1}}}\Big[x^{\frac{m\alpha_r}{2}+1}+2d_{kj}x^{\frac{m\alpha_r+1}{2}}+(d_{kj}^2-D_{\min}^2)x^{\frac{m\alpha_r}{2}}\Big]\frac{1}{\left(\frac{ m}{\Omega s}x^{\frac{\alpha_r}{2}}+1\right)^{m} }dx\notag\\
    &~~~+4d_{kj}\int\limits_{d_{jI_1}^{\rm{g_1}}}^{d_{jI_1}^{\rm{g_2}}}\frac{x^{\frac{m\alpha_r+1}{2}}}{\left(\frac{ m}{\Omega s}x^{\frac{\alpha_r}{2}}+1\right)^{m} }dx+\int\limits_{d_{jI_1}^{\rm{g_2}}}^{d_{jI_1}^{\max}}\frac{-x^{\frac{m\alpha_r}{2}+1}+2d_{kj}x^{\frac{m\alpha_r+1}{2}}+(D_{\max}^2-d_{kj}^2)x^{\frac{m\alpha_r}{2}}}{\left(\frac{ m}{\Omega s}x^{\frac{\alpha_r}{2}}+1\right)^{m} }dx\Bigg\}\notag\\
    &=\left(\frac{m}{\Omega s}\right)^{m}\frac{\pi }{2d_{kj}V_1 }\Bigg\{\mathcal{F}\left(\frac{4}{\alpha_r},d_{jI_1}^{\rm{g_1}},d_{jI_1}^{\min}\right)+2d_{kj}\mathcal{F}\left(\frac{3}{\alpha_r},d_{jI_1}^{\rm{g_1}},d_{jI_1}^{\min}\right)+(d_{kj}^2-D_{\min}^2)\mathcal{F}\left(\frac{2}{\alpha_r},d_{jI_1}^{\rm{g_1}},d_{jI_1}^{\min}\right)+4d_{kj}\notag\\
    &\cdot\mathcal{F}\left(\frac{3}{\alpha_r},d_{jI_1}^{\rm{g_2}},d_{jI_1}^{\rm{g_1}}\right)-\mathcal{F}\left(\frac{4}{\alpha_r},d_{jI_1}^{\max},d_{jI_1}^{\rm{g_2}}\right)+2d_{kj}\mathcal{F}\left(\frac{3}{\alpha_r},d_{jI_1}^{\max},d_{jI_1}^{\rm{g_2}}\right)+(D_{\max}^2-d_{kj}^2)\mathcal{F}\left(\frac{2}{\alpha_r},d_{jI_1}^{\max},d_{jI_1}^{\rm{g_2}}\right)\Bigg\},
\end{align}
\end{small}
where 
\begin{align}
    \mathcal{F}(a,b,c)=&\big[\Hypergeometric{2}{1}{m,m+a}{m+a+1}{-\frac{m}{\Omega s}b^{\frac{\alpha_r}{2}}}b^{\frac{(m+a)\alpha_r}{2}}\notag\\
    &-\Hypergeometric{2}{1}{m,m+a}{m+a+1}{-\frac{m}{\Omega s}c^{\frac{\alpha_r}{2}}} c^{\frac{(m+a)\alpha_r}{2}}\big]\frac{2}{(m+a)\alpha_r}\notag
\end{align}
and $\Hypergeometric{2}{1}{\cdot,\cdot}{\cdot}{\cdot}$ denotes Gauss hypergeometric function.

Finally, \eqref{esi} can be obtained by substituting \eqref{I6ana} in \eqref{esii6}.

\section*{Appendix IV: Proof of Lemma \ref{lemmaans}}
According to Leibnitz' rule, $\mathcal{A}^{(n)}(s,d_{kj})$ ($n>0$) can be expressed as
\begin{align}\label{Asnf1f2}
    \mathcal{A}^{(n)}(s,d_{kj})=\frac{\lambda_{\rm CH} \pi}{2d_{kj}}\left(\frac{m}{\Omega}\right)^{m}\sum\limits_{l=0}^{n}\binom{n}{l}f_1^{(n-l)}(s)f_2^{(l)}(s,d_{kj}),
\end{align}
where $f_1(s)=s^{-m}$ and $f_2(s,d_{kj})$ is presented as
\begin{align}\label{f2}
    f_2(s,d_{kj})=&\mathcal{F}\left(\frac{4}{\alpha_r},d_{jI_1}^{\rm{g_1}},d_{jI_1}^{\min}\right)+(d_{kj}^2-D_{\min}^2)\mathcal{F}\left(\frac{2}{\alpha_r},d_{jI_1}^{\rm{g_1}},d_{jI_1}^{\min}\right)\notag\\
    &+2d_{kj}\mathcal{F}\left(\frac{3}{\alpha_r},d_{jI_1}^{\rm{g_1}},d_{jI_1}^{\min}\right)+4d_{kj}\mathcal{F}\left(\frac{3}{\alpha_r},d_{jI_1}^{\rm{g_2}},d_{jI_1}^{\rm{g_1}}\right)-\mathcal{F}\left(\frac{4}{\alpha_r},d_{jI_1}^{\max},d_{jI_1}^{\rm{g_2}}\right)\notag\\
    &+2d_{kj}\mathcal{F}\left(\frac{3}{\alpha_r},d_{jI_1}^{\max},d_{jI_1}^{\rm{g_2}}\right)+(D_{\max}^2-d_{kj}^2)\mathcal{F}\left(\frac{2}{\alpha_r},d_{jI_1}^{\max},d_{jI_1}^{\rm{g_2}}\right).
\end{align}

It is easy to get the $(n-l)$th derivative of $f_1(s)$ as
\begin{align}\label{f1nl}
    f_1^{(n-l)}(s)=(-1)^{n-l}(m)_{n-l}s^{-m-n+l},
\end{align}
where $(m)_{n-l}=\prod\limits_{k=0}^{n-l-1}(m-k)$ is the rising Pochhammer symbol.

To get the $l$th derivative of $f_2(s,d_{kj})$, we should first calculate the $l$th derivative of $\Delta(s,a,b)=\Hypergeometric{2}{1}{m,m+a}{m+a+1}{-\frac{m}{\Omega s}b^{\frac{\alpha_r}{2}}}$.

When $l=0$, $\Delta^{(0)}(s,a,b)=\Delta(s,a,b)$ can be acquired.

When $l>0$, employing the Faà di Bruno's formula, $\Delta^{(l)}(s)$ can be obtained as
\begin{align}\label{deltal}
    \Delta^{(l)}(s,a,b)&=\sum_{q=1}^{l} \frac{(m+a)(m)_{q}}{m+a+q} \Hypergeometric{2}{1}{m+q,m+a+q}{m+a+q+1}{-\frac{mb^{\frac{\alpha_r}{2}}}{\Omega s}}\notag\\
    &~~~~\times B_{l,q}\left(f_3^{(1)}(s,b),...,f_3^{(l-q+1)}(s,b)\right),
\end{align}
where $f_3(s,b)=-\frac{mb^{\frac{\alpha_r}{2}}}{\Omega }s^{-1}$, and $f_3^{(q)}(s,b)=(-1)^{q+1}q!\frac{mb^{\frac{\alpha_r}{2}}}{\Omega }s^{-1-q}$.

Furthermore, the $l$th derivative of $\mathcal{F}(a,b,c)$ can be achieved as
\begin{align}\label{fl}
    \mathcal{F}^{(l)}(a,b,c)=&\frac{2}{(m+a)\alpha_r}\Big[b^{\frac{(m+a)\alpha_r}{2}}\Delta^{(l)}(s,a,b)-c^{\frac{(m+a)\alpha_r}{2}}\Delta^{(l)}(s,a,c)\Big].
\end{align}

Combining \eqref{f2}, \eqref{deltal}, and \eqref{fl},  we can get the $l$th derivative of $f_2(s)$, $f_2^{(l)}(s)$ . Then, \eqref{nAq} can be achieved by substituting \eqref{f1nl} and $f_2^{(l)}(s)$ in \eqref{Asnf1f2}.

\bibliography{citation}
\end{document}